\def\calP{\mathcal{P}}
\def\Ed{\mathsf{E}\mathrm{d}}
\begin{document}

\title{The Two-Center Problem of Uncertain Points on Trees\thanks{A preliminary version of this paper appeared in \textit{Proceedings of the 16th Annual International Conference on Combinatorial Optimization and Applications (COCOA 2023)}}}

\titlerunning{The Two-Center Problem of Uncertain Points on Trees}
%
\author{Haitao Xu \and Jingru Zhang}

\authorrunning{H. Xu and J. Zhang}

\institute{Cleveland State University, Cleveland, Ohio 44115, USA \\
\email{h.xu12@vikes.csuohio.edu, j.zhang40@csuohio.edu}}
\maketitle             
\begin{abstract}
 In this paper, we consider the (weighted) two-center problem of uncertain points on a tree. Given are a tree $T$ and a set $\calP$ of $n$ (weighted) uncertain points each of which has $m$ possible locations on $T$ associated with probabilities. The goal is to compute two points on $T$, i.e., two centers with respect to $\calP$, so that the maximum (weighted) expected distance of $n$ uncertain points to their own expected closest center is minimized. This problem can be solved in $O(|T|+ n^{2}\log n\log mn + mn\log^2 mn \log n)$ time by the algorithm for the general $k$-center problem. In this paper, we give a more efficient and simple algorithm that solves this problem in $O(|T| + mn\log mn)$ time. 
\keywords{Algorithms\and Two-center\and Trees\and Uncertain points}
\end{abstract}

\section{Introduction}\label{sec:intro}
Facility locations play a significant role in operations research due to its wide applications in transportation, sensor deployments, circuit design, etc. Consider the inherent uncertainty of collected data caused by measurement errors, sampling discrepancy, and object mobility. It is natural to consider facility location problems on uncertain points. The locational model for uncertain points has been considered a lot in facility locations~\cite{ref:WangCo17,ref:QuanLi21,ref:HuCo23}, where the location of an uncertain point is represented by a probability density function (pdf). In this paper, we study the two-center problem, one of classical facility location problems, for uncertain points on a tree under the locational model.


Let $T$ be a tree. We consider each edge $e$ of $T$ as a line segment of a positive length so that we can talk about ``points'' on $e$. Formally, we specify a point $x$ of $T$ by an edge $e$ that contains $x$ and the distance between $x$ and an incident vertex of $e$. For any two points $p$ and $q$ on $T$, the distance $d(p,q)$ is the sum of all edges on the simple path from $p$ to $q$. Let $\calP$ be a set of $n$ uncertain points $P_1, \cdots, P_n$. Each $P_i\in\calP$ is associated with $m$ locations $p_{i1}, p_{i2},\cdots, p_{im}$ each being a point on $T$, and each location $p_{ij}$ has a probability $f_{ij}\geq 0$ for $P_i$ appearing at location $p_{ij}$. Additionally, each $P_i$ has a non-negative weight $w_i$. 



For any (deterministic) point $x$ on $T$, the distance of any uncertain point $P_i$ to $x$ is the {\em expected} version and defined as $\sum_{j=1}^{m}f_{ij}\cdot d(p_{ij}, x)$. 
We use $\Ed(P_i,x)$ to denote the expected distance of $P_i$ to $x$. Let $x_1$ and $x_2$ be any points on $T$. We say that an uncertain point $P_i$ is {\em expectedly closer} to $x_1$ if $\Ed(P_i,x_1)\leq\Ed(P_i,x_2)$. 

Define $\phi(x_1, x_2) = \max_{1\leq i\leq n}\{w_i\cdot\min(\Ed(P_i,x_1), \Ed(P_i,x_2))\}$. The {\em two-center} problem aims to compute two points on $T$ so as to minimize $\phi(x_1, x_2)$ and the two optimal points, denoted by $q^*_1$ and $q^*_2$, are called centers with respect to $\calP$ on $T$. We say that $P_i$ is {\em covered} by $q^*_1$ if $P_i$ is expectedly closer to $q^*_1$. 

The algorithm~\cite{ref:WangCo19} for the general $k$-center problem can address our problem in $O(|T|+ n^{2}\log n\log mn + mn\log^2 mn\log n)$ time. In this paper, however, we present an $O(|T|+ mn\log mn)$-time algorithm with the assistance of our proposed linear-time approach for the decision two-center problem. Note that the time complexity of our algorithm almost matches the $O(|T|+ mn\log m +n\log n)$ result~\cite{ref:XuTh23} for the case of $T$ being a path.



\subsection{Related work}\label{sec:prework}
If every $P_i$ has only one location then the problem falls into the deterministic case. Ben-Moshe et al.~\cite{ref:Ben-MosheAn06} adapted Megiddo's prune-and-search technique~\cite{ref:MegiddoLi83} to 
solve in $O(n)$ time the deterministic two-center on a tree where each vertex is a demand point. On a cactus graph, the two-center problem was addressed in their another work~\cite{ref:Ben-MosheEf07}, and an $O(n\log^3 n)$-time algorithm was proposed. 
On a general graph, Bhattacharya and Shi~\cite{ref:BhattacharyaIm14} reduced the decision problem into the two-dimensional Klee's measure problem~\cite{ref:ChanKl13} so that the problem can be solved in polynomial time. 
The planar version was studied in several works~\cite{ref:WangOn22,ref:ChanDy99,ref:EppsteinFa97}. The state-of-the-art result is an $O(n\log^2 n)$ deterministic algorithm given by Wang~\cite{ref:WangOn22}.

In general, every $P_i$ has more than $m>1$ locations on $T$. As mentioned above, Wang and Zhang~\cite{ref:WangCo19} considered the general $k$-center problem so that the two-center problem can be addressed in $O(|T|+ n^{2}\log n\log mn + mn\log^2 mn\log n)$ time. If $T$ is a path, the two-center was solved in $O(|T|+ mn\log m +n\log n)$ time in our previous work~\cite{ref:XuTh23}. 
One of the most related problems is the one-center problem. Wang and Zhang~\cite{ref:WangCo17} generalized Megiddo's prune-and-search technique to solve the one-center of $\calP$ on a tree in linear time. Hu and Zhang~\cite{ref:HuCo23} studied the uncertain one-center problem on a cactus graph and proposed an $O(|T|+mn\log mn)$-time algorithm. Moreover, Li and Huang~\cite{ref:HuangSt17} considered the planar Euclidean uncertain $k$-center and gave an approximation algorithm. Later, Alipour and Jafari~\cite{ref:AlipourIm21} improved their result to an $O(3+\epsilon)$-approximation and proposed a $10$-approximation algorithm for any metric space. 


\subsection{Our approach}\label{sec:ourapp}
The locations of the uncertain points of $\calP$ may be in the interior of edges of $T$. A vertex-constrained case happens when all locations are at vertices of $T$ and each vertex of $T$ contains locations. As shown in~\cite{ref:WangCo19}, any general case can be reduced to a vertex-constrained case. In the following, we focus on discussing the vertex-constrained case. 

Let $\lambda^*$ be the minimized objective value. The median of each $P_i\in\calP$ is the point where $\Ed(P_i,x)$ reaches its minimum. We first solve the decision problem that determines whether $\lambda\geq\lambda^*$ for any given $\lambda$, i.e., decide if $\lambda$ is \textit{feasible}. This can be addressed in $O(mn\log^2 mn)$ time by the algorithm~\cite{ref:WangCo19}, which relies on several dynamic data structures requiring $O(mn\log^2 mn)$-time constructions. Since the convexity of each $\Ed(P_i,x)$ on any path, we develop a much simpler algorithm that is free of any data structures and runs faster in $O(mn)$ time.



Regarding our decision problem, there is an observation that there exists an edge $e$ on $T$ such that a center must be placed on $e$ to cover all uncertain points whose medians are on one of the two subtrees generated by removing $e$ from $T$. Such an edge is called a \textit{peripheral-center} edge. Our decision algorithm first computes a peripheral-center edge in $O(mn)$ time. With this edge, the feasibility of $\lambda$ can be known in $O(mn)$ time. 

To compute centers $q^*_1$ and $q^*_2$, we first compute the two \textit{critical} edges that respectively contain $q^*_1$ and $q^*_2$. Because an essential lemma can decide in $O(mn)$ time which split subtree of any given point contains a critical edge. Our algorithm finds each critical edge on $T$ recursively with the assistance of this lemma. Once the two edges are found, $q^*_1$ and $q^*_2$ can be computed in $O(mn\log n)$ time. 

\section{Preliminaries}\label{sec:pre}

Let $u$ and $v$ be any two vertices on $T$. Denote by $e(u,v)$ the edge incident to both $u$ and $v$. For any two points $p$ and $q$ on $T$, we let $\pi(p,q)$ be the simple path between $p$ and $q$. As in~\cite{ref:WangCo19}, the lowest common ancestor data structure~\cite{ref:BenderTh00} can be applied to $T$ so that with an $O(mn)$ preprocessing work, the path length of $\pi(p,q)$, i.e., the distance $d(p,q)$, can be known in constant time. 

Let $\pi$ be any simple path on $T$ and $x$ be any point on $\pi$. For any $P_i\in\calP$, as analyzed in~\cite{ref:WangCo17}, $\Ed(P_i,x)$ is a convex piece-wise linear function in $x\in\pi$, and it monotonically increases or decreases as $x$ moves on any edge of $\pi$ from one ending vertex to the other. 

\begin{figure}[ht]
\centering
\begin{minipage}{0.48\linewidth}
 \centering
 \includegraphics[width=0.65\textwidth]{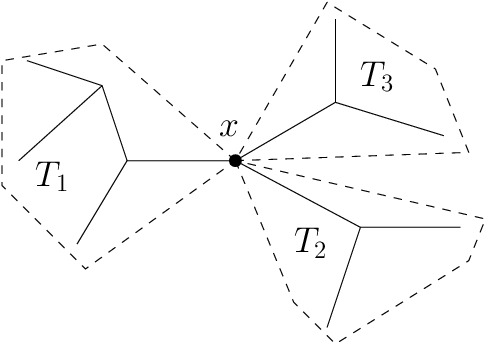}
 \caption{The point $x$ has three split subtrees $T_1$, $T_2$ and $T_3$.}
 \label{fig:fig1}
\end{minipage}
\hfill
\begin{minipage}{0.48\linewidth}
 \centering
 \includegraphics[width=0.95\textwidth]{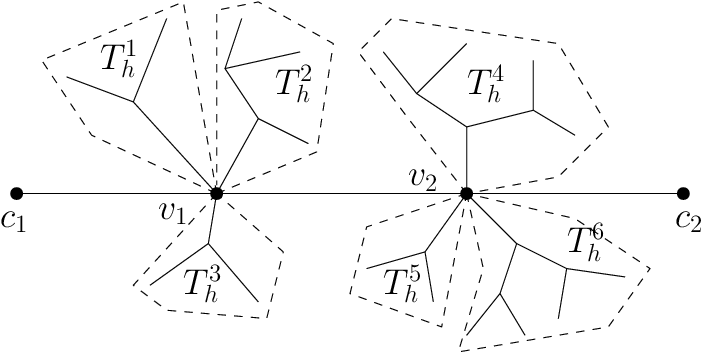}
 \caption{Illustrating the tree $T_h$ for the case $C=2$ where $V =\{v_1, v_2\}$ and $\Gamma(V)=\{T^1_h, \cdots, T^6_h\}$.}
\label{fig:fig2}
\end{minipage}
\end{figure}

Let $x$ be any point on $T$. Removing $x$ from $T$ generates several disjoint subtrees. Consider $x$ as an ``open vertex'' on each subtree 
that is free of any locations. We call each obtained subtree a \textit{split subtree} of $x$. See Fig.~\ref{fig:fig1} for an example. For any $P_i\in\calP$ and any subtree $T'$ of $T$, we refer to the sum of probabilities of $P_i$'s all locations in $T'$ as the \textit{probability sum} of $P_i$ in $T'$. Denote by $p_i^*$ the median of $P_i$ that minimizes $\Ed(P_i,x)$ among all points on $T$. 
The following lemma was given in~\cite{ref:WangCo17}. 

\begin{lemma}\label{lem:medians}
\cite{ref:WangCo17} Consider any point $x$ on $T$ and any uncertain point $P_i$ of $\calP$. 
\begin{enumerate}
\item If $x$ has a split subtree whose probability sum of $P_i$ is greater than $0.5$, then $p^*_i$ must be in that split subtree.
\item The point $x$ is $p^*_i$ if the probability sum of $P_i$ in each of $x$'s split subtrees is less than $0.5$. 
\item The point $x$ is $p^*_i$ if $x$ has a split subtree in which the probability sum of $P_i$ is equal to $0.5$. 
\end{enumerate}
\end{lemma}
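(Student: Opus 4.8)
The plan is to deduce the entire lemma from a single elementary fact --- a one-sided directional-derivative computation of $\Ed(P_i,\cdot)$ at $x$ --- together with the convexity of $\Ed(P_i,\cdot)$ along paths recorded above. Fix $P_i$ and $x$, let $T_1,\dots,T_k$ be the split subtrees of $x$, and for each $l$ let $\sigma_l$ be the probability sum of $P_i$ in $T_l$. Since the open vertex $x$ carries no location, $\sum_{l=1}^{k}\sigma_l=1$; should some $p_{ij}$ coincide with a vertex $x$, we exclude it from every $T_l$ and, in the computation below, treat it like a location lying outside whichever split subtree we step into (its distance to a nearby point of that subtree also increases at unit rate), which changes nothing.

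First I would prove the following claim: if $y$ is the point of $T_l$ on the edge of $T_l$ incident to $x$ at distance $t$ from $x$, then for every sufficiently small $t>0$
\[
\Ed(P_i,y)-\Ed(P_i,x)=t\,(1-2\sigma_l).
\]
Indeed, for $t$ below both the length of that incident edge and $d(x,p_{ij})$, each location $p_{ij}\in T_l$ has $y$ on $\pi(x,p_{ij})$, hence $d(y,p_{ij})=d(x,p_{ij})-t$; and each location $p_{ij}\notin T_l$ has $\pi(y,p_{ij})$ passing through $x$, hence $d(y,p_{ij})=d(x,p_{ij})+t$. Weighting by $f_{ij}$ and summing gives the displayed identity, so the right derivative of $\Ed(P_i,\cdot)$ at $x$ toward $T_l$ equals $1-2\sigma_l$.

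The three statements then follow from convexity. For (1), $\sigma_l>0.5$ makes this derivative negative, so $\Ed(P_i,\cdot)$ strictly decreases as one enters $T_l$ from $x$; and for any point $z\notin T_l$, traversing $\pi(z,y)$ from $z$ to a point $y$ just inside $T_l$, the slope immediately past $x$ (toward $T_l$) is $1-2\sigma_l<0$, so by convexity (slopes nondecreasing along the traversal) the slope is negative on the whole stretch, giving $\Ed(P_i,z)>\Ed(P_i,x)>\Ed(P_i,y)$. Thus every minimizer of $\Ed(P_i,\cdot)$ lies strictly inside $T_l$, and since $\sigma_l>0.5$ forces $\sigma_{l'}<0.5$ for $l'\ne l$, $T_l$ is the unique such subtree, so $p_i^*\in T_l$. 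For (2), every $\sigma_l<0.5$ makes the right derivative positive in every direction from $x$, so convexity along the path from $x$ to any $z\ne x$ (which leaves $x$ into some $T_l$) yields $\Ed(P_i,z)>\Ed(P_i,x)$, i.e.\ $x$ is the unique minimizer. For (3), $\sigma_l=0.5$ forces $\sigma_{l'}\le 0.5$ for all $l'$, so every right derivative $1-2\sigma_{l'}$ is $\ge 0$ and convexity gives $\Ed(P_i,z)\ge\Ed(P_i,x)$ for all $z$; hence $x$ attains the minimum and may be taken as $p_i^*$.

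I expect the only delicate point to be making the derivative computation airtight: one must check that a short step from $x$ into $T_l$ changes the distance to every location monotonically at rate exactly $\pm1$ --- which relies on $T_l$ being attached to $x$ through a single edge and on $t$ being smaller than that edge's length and than every $d(x,p_{ij})$ --- and one must settle the bookkeeping for a location sitting exactly on a vertex $x$. Everything else is a routine consequence of the already-cited convexity and piecewise linearity of $\Ed(P_i,\cdot)$; in fact the lemma is the tree analogue of the classical majority-rule characterization of a weighted $1$-median on a tree.
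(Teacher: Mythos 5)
Your proof is correct. Note that the paper itself gives no proof of this lemma---it is imported verbatim from the cited reference \cite{ref:WangCo17}---so there is nothing to compare against; your one-sided derivative computation $\Ed(P_i,y)-\Ed(P_i,x)=t(1-2\sigma_l)$ combined with convexity along paths is exactly the standard ``majority-rule'' argument for the weighted $1$-median on a tree, and your handling of the boundary cases (a location coinciding with $x$, the tie case $\sigma_l=0.5$, and non-uniqueness of $p^*_i$ in part~(3)) is sound.
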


Note that the median of $P_i$ may not be unique. But all points on $T$ minimizing $\Ed(P_i,x)$ induce a connected subtree of $T$. Let $p^*_i$ be any of these points. Due to the convexity of $\Ed(P_i,x)$, we have that $\Ed(P_i,x)$ monotonically increases as $x$ moves along any path away from $p^*_i$. 

\section{The decision algorithm}\label{sec:decision}
Given any value $\lambda>0$, the decision problem is to determine whether there exist no more than two points, i.e., centers, on $T$ so that the expected distance of each $P_i\in\calP$ to at least one of them is at most $\lambda$. If yes, then $\lambda\geq\lambda^*$ and so $\lambda$ is feasible. Otherwise, $\lambda<\lambda^*$ and $\lambda$ is infeasible. Clearly, $\lambda^*$ is the smallest feasible value. 

To establish a context for our work, we first introduce the algorithm~\cite{ref:WangCo19} for the decision $k$-center problem. Let $T_m$ be the minimum subtree of $T$ spanning the medians of $n$ uncertain points. We say that an uncertain point is covered by a center if their expected distance is at most $\lambda$. The convexity of $\Ed(P_i,x)$ leads the greedy algorithm that places minimum centers in the bottom-up manner whenever it ``has to''. During the post-order traversal, for each vertex $v$, considering all `active' uncertain points whose medians are in the subtree rooted at $v$, a center must be placed on the edge incident to $v$ and its parent vertex $u$ if one of these uncertain points has its expected distance at $u$ larger than $\lambda$. If so, then the center is placed on this edge at the point where the maximum expected distance of these uncertain points equals $\lambda$, and next all uncertain points covered by this center are ``deactivated''. With the assistance of several data structures, minimum centers can be placed on $T$ in $O(mn\log^2 mn)$ time. 

We say that an edge $e$ of $T$ is a \textit{peripheral-center} edge if a center must be placed on $e$ so that this center is determined by all uncertain points whose medians are all on one of the two subtrees generated by removing $e$ from $T$ (but keeping its two incident vertices), and the center on $e$ is a \textit{peripheral} center. 

Regarding our problem, the goal is to determine whether two peripheral centers can be found on $T$ so that the expected distance of every uncertain point to at least one of them is no more than $\lambda$. Our algorithm thus first finds a peripheral-center edge and then applies
Lemma~\ref{lem:basecase} to decide the feasibility of the given $\lambda$. If it is feasible then the two centers are returned. 

\begin{lemma}\label{lem:basecase}
Given a peripheral-center edge $e$ on $T$, we can determine in $O(mn)$ time whether $\lambda\geq\lambda^*$, and if yes, the two centers can be computed in $O(mn)$ time. 
\end{lemma}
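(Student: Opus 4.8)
The plan is to exploit the peripheral-center edge $e$ to pin down one of the two centers and then reduce the remaining task to a single one-center feasibility query, which the linear-time algorithm of Wang and Zhang~\cite{ref:WangCo17} answers. Write $e=e(a,b)$ and let $T_a\ni a$, $T_b\ni b$ be the two subtrees obtained by deleting $e$ (keeping $a$ and $b$). By the defining property of a peripheral-center edge we may assume the relevant subtree is $T_b$: in every feasible solution one center lies on $e$ and covers every uncertain point whose median lies in $T_b$. Operationally, let $\mathcal{Q}_b=\{P_i:\text{the probability sum of }P_i\text{ in }T_b\text{ is at least }0.5\}$; by Lemma~\ref{lem:medians} this contains exactly those $P_i$ whose median is in $T_b$, plus the degenerate ones for which $\Ed(P_i,\cdot)$ is constant on $e$, and one checks that the center on $e$ guaranteed by the peripheral-center property covers all of $\mathcal{Q}_b$ (the extra points attain their global minimum expected distance already on $e$). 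Put $\mathcal{Q}_a=\calP\setminus\mathcal{Q}_b$. The first step is to form $\mathcal{Q}_b$ in $O(mn)$ time by one pass over all $mn$ locations, using a precomputed labelling of which side of $e$ each location lies on.

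The second step is to place the first center $c_1$ on $e$. By the observation recalled in Section~\ref{sec:pre}, each $\Ed(P_i,\cdot)$ is linear on the single edge $e$; parametrizing $e$ by distance from $a$, the slope of $\Ed(P_i,\cdot)$ equals the probability sum of $P_i$ in $T_a$ minus that in $T_b$, so it is non-positive for $P_i\in\mathcal{Q}_b$ and strictly positive for $P_i\in\mathcal{Q}_a$. Hence $g(x):=\max_{P_i\in\mathcal{Q}_b}w_i\Ed(P_i,x)$ is non-increasing as $x$ moves from $a$ to $b$, and the set of points of $e$ that cover all of $\mathcal{Q}_b$ is a sub-segment ending at $b$. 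I would take $c_1$ to be the endpoint of that sub-segment closest to $a$: for each $P_i\in\mathcal{Q}_b$ compute $\Ed(P_i,a)$ and $\Ed(P_i,b)$ in $O(m)$ time with the distance oracle, solve $w_i\Ed(P_i,\cdot)=\lambda$ along $e$ to obtain the threshold position beyond which (toward $b$) $P_i$ is covered, and let $c_1$ be the position at the maximum of these thresholds. If that maximum exceeds $|e|$ — i.e. even $b$ fails to cover some $P_i\in\mathcal{Q}_b$ — then by the peripheral-center property no feasible solution exists and we report $\lambda<\lambda^*$. This step is $O(mn)$.

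The third step fixes the first center at $c_1$, which is without loss of generality: in any feasible solution the center on $e$ covers $\mathcal{Q}_b$, hence lies on the sub-segment of $e$ between $c_1$ and $b$; sliding it along $e$ toward $a$ to $c_1$ keeps all of $\mathcal{Q}_b$ covered by the choice of $c_1$, and only decreases $\Ed(P_j,\cdot)$ for every $P_j\in\mathcal{Q}_a$ (whose $\Ed$ has positive slope from $a$ to $b$), so the shifted pair is still feasible. Therefore $\lambda\geq\lambda^*$ if and only if $c_1$ exists and, after deleting the covered set $\mathrm{Cov}(c_1)=\{P_i:w_i\Ed(P_i,c_1)\leq\lambda\}$, the remaining set $\calP'=\calP\setminus\mathrm{Cov}(c_1)$ admits a single point $c_2$ with $w_i\Ed(P_i,c_2)\leq\lambda$ for every $P_i\in\calP'$. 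I would compute $\mathrm{Cov}(c_1)$ in $O(mn)$ time — each $\Ed(P_i,c_1)$ is a sum of $m$ distances to the fixed point $c_1$, obtained in $O(1)$ each from the oracle — and then run the linear-time weighted one-center algorithm of~\cite{ref:WangCo17} on $\calP'$, reusing the already-built oracle, to get its one-center $c_2$ and value $\lambda'=\max_{P_i\in\calP'}w_i\Ed(P_i,c_2)$ in $O(m|\calP'|)=O(mn)$ time. Then $\lambda$ is feasible exactly when $c_1$ exists and $\lambda'\leq\lambda$, and in that case $(c_1,c_2)$ is the required pair; if $\mathcal{Q}_b=\calP$ then $\calP'=\emptyset$, the one-center call is trivial, and a single center suffices. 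Every phase is a constant number of passes over the $mn$ location-probability pairs, so the whole procedure runs in $O(mn)$ time.

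The step I expect to be the crux is the correctness of fixing $c_1$, which rests on two ingredients already available to us: the defining guarantee of a peripheral-center edge — that in every feasible solution one center lies on $e$ and covers all of $\mathcal{Q}_b$ — and the monotonicity of each $\Ed(P_i,\cdot)$ along $e$, whose direction is determined by which side of $e$ the median of $P_i$ lies on. Once the ``slide toward $a$'' exchange argument is in place, the rest is routine bookkeeping plus a black-box call to the known one-center routine, and the running-time bound is immediate.
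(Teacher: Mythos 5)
Your proposal is correct and follows essentially the same route as the paper's proof: linearize each $\Ed(P_i,\cdot)$ on $e$ via the probability sums in the two subtrees, place one center at the extremal point of $e$ that still covers the side the peripheral center is responsible for, deactivate the points it covers, and decide feasibility with one call to the linear-time one-center algorithm of~\cite{ref:WangCo17}. The one point to make explicit is that the ``relevant'' subtree is not part of the input, so you should run your procedure for both choices of side and accept if either succeeds (otherwise picking the wrong side could wrongly report infeasibility) --- this is exactly what the paper's check of whether $t'$ or $t''$ lies in $[0,d(u,v)]$ accomplishes --- while your explicit slide-toward-$a$ exchange argument cleanly justifies a step the paper only asserts.
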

\begin{proof}
Let $u$ and $v$ be the two incident vertices of $e$. Let $x$ be any point on $e$. To compute the peripheral center on $e$, we first determine function $y=\Ed(P_i,x)$ for each $P_i\in\calP$. Removing $e$ but keeping $u$ and $v$ generates two disjoint subtrees of $T$. Let $T_1$ be the one containing $u$ and $T_2$ be the other. Clearly, $T_1$ and $T_2$ can be computed in $O(mn)$ time. We then traverse $T_1$ to compute the probability sum of each $P_i$ in $O(|T_1|+n)$ time as follows. First, create an auxiliary array $F[1\cdots n]$ and initialize it as zero. We then traverse $T_1$: For each location $p_{ij}$ on $T_1$, we add $f_{ij}$ to $F[i]$. We proceed with computing $\Ed(P_i,u)$ for each $P_i\in\calP$ in $O(mn)$ time. For each $P_i$, we have $\Ed(P_i,x) = \Ed(P_i,u)+ w_iF[i]\cdot d(u,x)- w_i(1-F[i])\cdot d(u,x)$, which is a linear function in $d(u,x)$, and it can be determined in constant time. 

Let $\calP'$ be the subset containing all uncertain points whose medians are on $T_1$, i.e., whose probability sums in $T_1$ are at least $0.5$. So, $\calP/\calP'$ consists of all uncertain points whose medians are in $T_2$. For each $P_i$ in $\calP'$ (resp., $\calP/\calP'$), $\Ed(P_i,x)$ is monotonically increasing (resp., decreasing) as $x$ moves from $u$ to $v$. It is not hard to see that the peripheral center on $e$ is decided by $\calP'$ or $\calP/\calP'$. Specifically, if it is decided by $\calP'$, then this peripheral center must be placed at the point where the maximum expected distance of uncertain points in $\calP'$ is exactly $\lambda$. 
We say that such a point is a \textit{critical} point of $\calP'$ on $e$. Otherwise, the peripheral center must be at the point on $e$ where the maximum expected distance of uncertain points in $\calP/\calP'$ is exactly $\lambda$. 

To decide this peripheral center, it suffices to compute the critical points of $\calP'$ and $\calP/\calP'$ on $e$. For each $P_i\in\calP'$, we resolve $\Ed(P_i,x) = \lambda$ in $O(1)$ time, and it generates a value $t_i$ that defines a point $x_{t_i}$ at distance $t_i$ to $u$ with $\Ed(P_i,x_{t_i}) = \lambda$. Clearly, if $t_i<0$ then $\Ed(P_i,x)>\lambda$ for any $x\in e$; if $t_i>d(u,v)$ then $\Ed(P_i,x)<\lambda$ for any $x\in e$; otherwise, there exist a point on $e$ with $\Ed(P_i,x)=\lambda$. We also resolve $\Ed(P_i,x) = \lambda$ for each $P_i\in(\calP/\calP')$ in $O(n)$ time. 

Let $t'$ (resp., $t''$) be the smallest (resp., largest) value among these obtained values for $\calP'$ (resp., $\calP/\calP'$), which can be obtained in $O(n)$ time. If $t'<0$ or $t'>d(u,v)$ then $\calP'$ has no critical point on $e$ and hence the peripheral center on $e$ is decided by $\calP/\calP'$. This implies that this peripheral center is on $e$ at distance $t''$ to $u$. We then compute the expected distances of all uncertain points to this center and set the weight as zero of every uncertain point whose expected distance at it is at most $\lambda$. This can be carried out in $O(mn)$ time. Next, we solve the one-center problem for $\calP$ on $T$ in $O(mn)$ time by the algorithm~\cite{ref:WangCo17}, which returns the optimal objective value $\lambda'$ of the one-center problem of $\calP$ on $T$ as well as their center. Clearly, if $\lambda'\leq\lambda$ then there exists another point on $T$ that can cover all uncertain points under $\lambda$ with weights larger than zero. This means that two points can be found on $T$ to cover $\calP$ under $\lambda$. Thus, $\lambda$ is feasible, and we return that peripheral center and the center returned by~\cite{ref:WangCo17}. Otherwise, at least two centers are needed to cover all remaining uncertain points and so we return $\lambda<\lambda^*$. 

If $t''<0$ or $t''>d(u,v)$ then $\calP/\calP'$ has no critical point on $e$ and hence the peripheral center on $e$ is decided by $\calP'$, i.e., it is on $e$ at distance $t'$ to $u$. Similarly, we can decide in $O(mn)$ time the feasibility of $\lambda$ and the two centers if it is feasible. Otherwise, $\lambda$ is feasible in the sense that both $\calP'$ and $\calP/\calP'$ have critical points on $e$. We immediately return the two critical points as centers. 

Overall, for each above case, we can decide in $O(mn)$ time the feasibility of $\lambda$ as well as the two centers if it is feasible. Thus, the lemma holds. \qed
\end{proof}

To find a peripheral-center edge on $T$, each round of our algorithm consists of two pruning steps: The first pruning step is a recursive procedure that ``shrinks'' $T$ to find a peripheral-center edge recursively. After at most  $\log m+1$ recursive steps, we obtain a subtree of at most $|\calP|/2$ vertices. It follows that at least a quarter of uncertain points are pruned from $\calP$ in the second step. After at most $\log n$ rounds, only $O(1)$ uncertain points remain, or a subtree of $T$ with $O(1)$ vertices is obtained. At this point, a peripheral-center edge can be found in $O(mn)$ time. 

\subsection{The first pruning step}
We first compute the \textit{centroid} $c$ of $T$, which is a vertex so that every split subtree of $c$ has no more than $|T|/2$ vertices, in $O(|T|)$ time by traversing the tree~\cite{ref:KarivAn79,ref:MegiddoLi83}. We then decide whether a center is at $c$, and if yes, then Lemma~\ref{lem:basecase} is used to decide the feasibility of $\lambda$ in $O(mn)$ time. Otherwise, we determine which split subtree of $c$ contains a peripheral-center edge. Lemma~\ref{lem:point} can be utilized to solve this problem in $O(|T|)$ time. Note that if $c$ has more than one such split subtree then $c$ is associated with a flag equal to true in Lemma~\ref{lem:point}. 

Next, we set $c$ as a \textit{connector} on the obtained split subtree $T'$ in $O(|T|)$ time as~\cite{ref:WangCo17} by traversing the \textit{connector subtree} $\{c\}\cup (T/T')$, denoted by $T'(c)$, to compute the location information of each $P_i$ in $T'(c)$. Specifically, we first create two \textit{information arrays} $F_c[1\cdots n]$ and $D_c[1\cdots n]$. Initialize them as zero. We then traverse $T'(c)$ and during the traversal, for each location $p_{ij}$, we add $f_{ij}$ to $F_c[i]$ and add value $w_if_{ij}\cdot d(p_{ij},c)$ to $D_c[i]$. Last, we associate $c$ on $T'$ with the two information arrays. Clearly, for each $1\leq i\leq n$, $F_c[i]$ is the probability sum of $P_i$ in $T'(c)$, and $D_c[i]$ is called the \textit{distance sum} of $P_i$'s locations in $T'(c)$ to $c$. 

In $O(|T|)$ time, we obtain a subtree of at most $|T|/2$ vertices that must contain a peripheral-center edge. We continue to perform the above procedure recursively on $T'$ to find a peripheral-center edge. Suppose we are about to perform the $h$-th recursive step. Denote by $T_{h-1}$ the obtained subtree after the $h-1$-th recursive step. $T_{h-1}$ consists of at most $|T|/2^{h-1}$ vertices and at most $h-1$ connectors. 

Similarly, we first compute its centroid $c$ in $O(|T_{h-1}|)$ time. During the traversal, we also count the number of vertices with true flags. If more than one such vertices exist, then at least three centers must be placed on $T$ to cover $\calP$ under $\lambda$ since two (pruned) connector subtrees and $T_h$ each must contain center(s) and so we return $\lambda<\lambda^*$. Otherwise, we decide if one center must be placed at $c$. If yes, Lemma~\ref{lem:basecase} is applied to decide $\lambda$'s feasibility immediately in $O(mn)$ time. Otherwise, we determine which split subtree of $c$ contains a peripheral-center edge. Lemma~\ref{lem:point} can address the problem in $O(|T_{h-1}|+n\cdot (h-1))$ time. 

In general, we obtain a subtree $T_h$ of at most $|T|/2^h$ vertices, and it must contain a peripheral-center edge. Note that $T_h$ has $h$ connectors. We then set $c$ as a connector on $T_h$ by computing its information arrays $F_c[1\cdots n]$ and $D_c[1\cdots n]$. This can be done in the above way except that when we visit a connector $u$ on $T_h$, we scan its information arrays to add $F_u[i]$ to $F_c[i]$ and add value $D_u[i] + w_iF_u[i]d(u,c)$ to $D_c[i]$ for each $1\leq i\leq n$. Thus, the time complexity of the $h$-th recursive step is $O(|T_{h-1}|+n\cdot (h-1))$. 

We perform the above procedure for $h = 1+\log m$ recursive steps. At this moment, by the definition of the centroid, we have $|T_h|\leq |T|/2^h = mn/2^h = n/2$. 
The running time of the $h$ recursive steps, i.e., the first pruning step, is $O(\sum_{i=0}^{h-1}(T_i+i\cdot n))$, which is $O(mn)$ due to $T_0=T$ and $h = 1+\log m$. 

Due to $|T_h|\leq n/2$, we can see that there are at least $n/2$ uncertain points in $\calP$ so that they have no locations in $T_h$. Let $\calP'$ be the subset of these uncertain points. To compute this subset, we create an array $I[1\cdots n]$ so that $P_i\in\calP'$ if $I[i]$ is true. It is easy to see that $I[1\cdots n]$, i.e., $\calP'$, can be computed in $O(|T_h|+n)$ time by traversing $T_h$. Below we will show in the second pruning step that at least half of $\calP'$, i.e., a quarter of uncertain points in $\calP$, can be pruned. 

\subsection{Pruning uncertain points}
We first traverse $T_h$ to count the number of vertices with true flags. If more than one such vertices exist, then we return $\lambda< \lambda^*$ in that at least three centers are needed to be placed on $T$. Otherwise, denote by $C$ the number of connectors on $T_h$. Depending on the value of $C$, our algorithm will proceed accordingly for three cases: $C=1$, $C=2$ and $C>2$. 

\subsubsection{The case $C=1$:} Let $c$ be the connector on $T_h$. If the flag of $c$ is true, then the connector subtree $T_h(c)$ must contain centers. By Lemma~\ref{lem:point}, we also have that peripheral-center edge(s) on $T_h$ must be decided by uncertain points with medians on $T_h$, i.e., whose probability sums in $T_h$ are at least $0.5$. Because Lemma~\ref{lem:point} always returns a subtree without any true-flag vertices unless no such subtrees exist. Keeping $c$'s flag being true leads that $\calP'$ can be pruned as computing a peripheral-center edge on $T_h$ further. So, we keep entries of $I[1\cdots n]$ for $\calP'$ being true. 

Otherwise, the connector subtree $T_h(c)$ does not contain any centers. Denote by $x_t$ any point on $T_h$ at distance $t$ to $c$. For each $P_i\in\calP'$, since all its locations are in $T_h(c)$, we have $\Ed(P_i,x_t) = \Ed(P_i,c) +w_i\cdot t$ where $\Ed(P_i,c) = D_c[i]$. Denote by $P_{i'}$ the uncertain point of $\calP'$ that determines the furthest point to $c$ on $T_h$ to cover $\calP'$, i.e., that has the smallest $t$-value by resolving $\Ed(P_i,c) +w_i\cdot t =\lambda$ for all $P_i\in\calP'$. Clearly, any point on $T_h$ that covers $P_{i'}$ must cover all in $\calP'$. Thus, $\calP'-\{P_{i'}\}$ can be pruned in the further step to compute a peripheral-center edge on $T_h$. To find $P_{i'}$, we scan $I[1\cdots n]$ to compute the $t$-value of each $P_i\in\calP'$: If $I[i]$ is true, we compute in $O(1)$ time $(\lambda - \Ed(P_i,c))/w_i$, i.e., the $t$-value. The smaller $t$-value is always maintained during the scan so that $P_{i'}$ can be found in $O(n)$ time. Last, we set $I[i']$ as false. Hence, it takes $O(n)$ time to find all uncertain points of $\calP'$ that will be pruned.     

It can be seen that for either case, at least a half of $\calP'$, that is, a quarter of uncertain points in $\calP$, can be pruned as we further search for peripheral-center edges on $T_h$. Additionally, if $I[i]$ for any $1\leq i\leq n$ is true then $P_i$ can be pruned. Last, we reconstruct a tree to prune these uncertain points. Traverse the connector subtree $T_h(c)$, i.e., $\{c\}\cup T/T_h$. For each location $p_{ij}$ on $T_h(c)$ with $I[i]$ being false, we create a \textit{dummy} vertex $v$, set $v$ as an adjacent vertex of $c$ on $T_h$ by a \textit{dummy} edge of length $d(c,v)$, and reassign $p_{ij}$ to $v$. These can be carried out totally in $O(|T|)$ time. 

A tree $T^+$ is thus obtained: $T^+$ contains at most $3n/4$ uncertain points and hence its size is $3mn/4$. $T^+$ has dummy vertices and each of them is a leaf. Additionally, the subtree generated by removing all dummy vertices from $T^+$ is exactly $T_h$ which contains a peripheral-center edge. As the flag of $c$ on $T^+$ is maintained, computing a peripheral-center edge on $T_h$ is equivalent to computing a non-dummy peripheral-center edge on $T^+$. 

\subsubsection{The case $C=2$:} Let $c_1$ and $c_2$ be the two connectors on $T_h$. Consider the path $\pi(c_1,c_2)$ between $c_1$ and $c_2$. Let $V$ be the set of vertices on $\pi(c_1, c_2)$ except for $c_1$ and $c_2$. For any vertex $v\in V$, denote by $\Gamma(v)$ the set of all split subtrees of $v$ in $T_h$ excluding the two containing $c_1$ and $c_2$. Let $\Gamma(V)= \cup_{v\in V}\Gamma(v)$. $V$ and $\Gamma(V)$ can be computed in $O(T_h)$ time. See Fig.~\ref{fig:fig2} for an example.

We proceed with determining whether $\Gamma(V)$ contains peripheral-center edges. This is an instance of the \textit{center-edge detecting} problem, defined below in Section~\ref{sec:lemmas}, and it can be solved in $O(|T|)$ time by Lemma~\ref{lem:centerdetecting}. If one split subtree $T'$ of $\Gamma(V)$ is returned then we can reduce this case to the case $C=1$ by setting the vertex $v'$ of $T'$ in $V$ as a connector on $T'$ in $O(|T_h|+n)$ time. Note that the flag of $v'$ is set properly in Lemma~\ref{lem:centerdetecting}. 
Otherwise, if  no subtrees are returned then $\pi(c_1,c_2)$ must contain a peripheral-center edge. Depending on whether $T_h(c_1)$ and $T_h(c_2)$ contain centers, we have different pruning approaches. 

Suppose $c_1$'s flag is equal to true, i.e., $T_h(c_1)$ contains centers. Let $\calP'_{c_1}$ be the subset of uncertain points in $\calP'$ whose medians are in $T_h(c_1)$ and $\calP'_{c_2}$ be the subset of the remaining uncertain points in $\calP'$ whose medians are in $T_h(c_2)$. Lemma~\ref{lem:point} implies that the peripheral-center edge on $\pi(c_1, c_2)$, i.e., that center, is decided by uncertain points whose medians are in $T/T_h(c_1)$. Keeping $c_1$'s flag being true allows us to prune $\calP'_{c_1}$ in the further rounds of computing a peripheral-center edge on $\pi(c_1, c_2)$. So, we keep entries of $I[1\cdots n]$ for $\calP'_{c_1}$ being true. 

Next, we find uncertain points of $\calP'_{c_2}$ that can be pruned in the further rounds. Let $x_t$ be any point on $\pi(c_1,c_2)$ at distance $t$ to $c_1$. Because all locations of uncertain points in $\calP'$ are in the two connector subtrees. Each $P_i\in\calP'$ has $\Ed(P_i,x_t)=\Ed(P_i,c_1) +w_iF_{c_1}[i]\cdot t - w_iF_{c_2}[i]\cdot t$, which is $D_{c_1}[i] + D_{c_2}[i] + w_i(F_{c_1}[i]-F_{c_2}[i])t +w_iF_{c_2}[i]d(c_1, c_2)$. Clearly, $\Ed(P_i,x_t)$ changes linearly as $x_t$ moves from $c_1$ to $c_2$. We resolve $\Ed(P_i,x_t) = \lambda$ in $O(1)$ time for each $P_i\in\calP'_{c_2}$. (For any $1\leq i\leq n$, if $I[i]$ is true and $F_{c_2}[i]\geq 0.5$ then $P_i\in\calP'_{c_2}$.) Let $t'$ be the largest value among all obtained, and $t'$ defines the furthest point to $c_2$ on $\pi(c_1,c_2)$ that covers $\calP'_{c_2}$. 

If $t'< 0$ then $\Ed(P_i,x)<\lambda$ for each $P_i\in\calP'_{c_2}$ and any $x\in\pi(c_1,c_2)$. This means that any center on $\pi(c_1, c_2)$ is irrelevant to $\calP'_{c_2}$. We thus are allowed to prune $\calP'_{c_2}$ when we shall compute a peripheral-center edge on $\pi(c_1, c_2)$. Otherwise, $t'\geq 0$ and let $x_{t'}$ be the point at distance $t'$ to $c_1$ on $\pi(c_1,c_2)$. We then call Lemma~\ref{lem:point} on $x_{t'}$ to decide in $O(|T_h|+n)$ time whether $x_{t'}$ must contain a center. If yes, then the feasibility of $\lambda$ can be decided in $O(mn)$ time by Lemma~\ref{lem:basecase}, and otherwise, a peripheral-center edge must be on $\pi(x_{t'}, c_2)$ in that the split subtrees in $\Gamma(V)$ of each $v$ on $\pi(x_{t'}, c_2)$ do not contain centers. It follows that $\calP'_{c_2}$ can be pruned during the further searching for the peripheral-center edge on $\pi(x_{t'}, c_2)$. Hence, we join a vertex $v'$ for $x_{t'}$ into the path if $x_{t'}$ is in the interior of an edge, and set the flag of $v'$ as true in $O(1)$ time to find a peripheral-center edge on $\pi(v', c_2)$ in further steps. For either $t'<0$ or $t'\geq 0$, $\calP'_{c_2}$ can be pruned, and so we keep entries of $I[1\cdots n]$ being true for $\calP'_{c_2}$ in order to prune them further. 



It is clear to see that if one of $c_1$ and $c_2$ has a true flag, we find in $O(|T|)$ time a subpath $\pi$ of $\pi(c_1,c_2)$ that contains a peripheral-center edge and at least a half of uncertain points in $\calP'$ that can be pruned in the further rounds of computing a peripheral-center edge on $\pi$. Notice that the flags of vertices are maintained in the above procedure.   

On the other hand, both $c_1$ and $c_2$ have false flags. We first compute the furthest point $x_1$ (resp., $x_2$) to $c_1$ (resp., $c_2$) on $\pi(c_1, c_2)$ that covers $\calP'_{c_1}$ (resp., $\calP'_{c_2}$), which can be computed in $O(n)$ time as the above. We then find irrelevant uncertain points for each below case. 

In the first case, $x_1$ is to the left of $x_2$. There must be a peripheral-center edge on $\pi(x_2, c_2)$ (resp., $\pi(c_1, x_1)$) if split subtrees in $\Gamma(V)$ of vertices on $\pi(x_2, c_2)$ (resp., $\pi(c_1, x_1)$) do not contain any vertex with a true flag, which can be determined in $O(|T_h|)$ time. Suppose it is on $\pi(x_2, c_2)$. Since any point on $\pi(x_2, c_2)$ cannot cover $\calP'_{c_1}$ under $\lambda$, $\calP'_{c_1}$ can be pruned by setting the flag of $x_2$ as true in our further searching on $\pi(x_2, c_2)$. Additionally, all in $\calP'_{c_2}$ except for the one determining $x_2$ can be pruned, which can be found in $O(n)$ time. 
So, we join a vertex $v'$ of a true flag for $x_2$ if necessary, and set the entry in $I[1\cdots n]$ as false for the uncertain point determining $x_2$. The case where $\pi(c_1, x_1)$ does contain a peripheral-center edge can be processed similarly. It is not hard to see that in $O(|T_h|+n)$ time we obtain a subpath $\pi$ of $\pi(c_1,c_2)$ that must contain a peripheral-center edge, and all irrelevant uncertain points can be found in $O(n)$ time. 


Another case is that $x_1$ is to the right of $x_2$. Similarly, if split subtrees in $\Gamma(V)$ of vertices on $\pi(c_1, x_2)$ (resp., $\pi(x_1, c_2)$) contain any vertex of a true flag then there must be a peripheral-center edge on $\pi(x_2, c_2)$ (resp., $\pi(c_1, x_1)$), which can be known in $O(|T_h|)$ time. If $\pi(x_2, c_2)$ (resp., $\pi(c_1, x_1)$) contains a peripheral-center edge, then $\calP'$ can be pruned except for the one in $\calP'_{c_2}$ (resp., $\calP'_{c_1}$) determining $x_2$ (resp., $x_1$). For the situation where a peripheral-center edge is on $\pi(x_2, c_2)$ (resp., $\pi(c_1, x_1)$), we join a vertex for $x_2$ if necessary and set its flag as true, and then reset the entry of $I[1\cdots i]$ as false for that uncertain point determines $x_2$ (resp., $x_1$). Clearly, all these operations can be done in $O(|T_h|+n)$ time. 

Otherwise, only split subtrees in $\Gamma(V)$ of vertices on $\pi(x_2, x_1)/\{x_1, x_2\}$ may contain a true-flag vertex. If such a split subtree exists, supposing it intersects $\pi(c_1, c_2)$ at vertex $v'$, then we apply Lemma~\ref{lem:point} to $v'$ to decide which split subtree of $v'$ contains a peripheral-center edge. Because $\pi(c_1,c_2)$ must contain a peripheral-center edge. A split subtree of $v'$ containing $c_1$ or $c_2$ must be returned. If it includes $c_2$ (resp., $c_1$), then $\calP'$ can be pruned as we further compute a peripheral-center edge on $\pi(v',c_2)$ (resp., $\pi(c_1,v')$). For either case, we maintain the path containing a peripheral-center edge and set the flag of $v'$ as true. Clearly, this case can be handled in $O(|T_h|+n)$ time as well.  

If no true-flag vertices are found in split subtrees of $\Gamma(v)$ of vertices on $\pi(x_2, x_1)/\{x_1, x_2\}$, then solving the decision problem is equivalent to solving the \textit{path-constrained} version that is to decide whether two points can be found on $\pi(c_1, c_2)$ to cover $\calP$ by the given $\lambda$. This is because no centers are necessary to be placed on split subtrees of $\Gamma(V)$ and the two connectors subtrees. Moreover, the path-constrained decision version can be solved in $O(mn)$ time by Lemma~\ref{lem:pathconstrained}. 

\begin{lemma}\label{lem:pathconstrained}
The path-constrained decision problem can be solved in $O(mn)$ time. 
\end{lemma}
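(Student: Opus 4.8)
The plan is to reduce the path-constrained decision problem to the one-dimensional two-piercing problem on intervals, which is what makes the linear running time possible. Parametrize the path $\pi=\pi(c_1,c_2)$ by the distance $t\in[0,L]$ from $c_1$, where $L=d(c_1,c_2)$, and let $x_t$ be the point of $\pi$ at distance $t$ from $c_1$. By the convexity fact recalled in Section~\ref{sec:pre}, for every $P_i\in\calP$ the function $g_i(t):=w_i\cdot\Ed(P_i,x_t)$ is convex and piecewise linear in $t\in[0,L]$, so its sublevel set $I_i:=\{t\in[0,L]:g_i(t)\le\lambda\}$ is a subinterval $[\alpha_i,\beta_i]$ of $[0,L]$ (possibly empty or degenerate). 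If some $I_i=\emptyset$, report infeasible. Otherwise the question is exactly whether two points of $[0,L]$ can be chosen so that each $I_i$ contains one of them. For this \emph{interval two-piercing} problem, let $i_0$ attain $t^{*}:=\min_i\beta_i$; any piercing set must hit $I_{i_0}$, so we may assume the first center is $x_t$ for some $t\in I_{i_0}$, and since $t\le t^{*}\le\beta_j$ for all $j$ the set of $P_j$ covered by $x_t$ equals $\{j:\alpha_j\le t\}$, which is inclusion-maximal at $t=t^{*}$. Hence we place the first center at $x_{t^{*}}$; the instance is feasible iff the residual index set $J=\{j:\alpha_j>t^{*}\}$ satisfies $\max_{j\in J}\alpha_j\le\min_{j\in J}\beta_j$, in which case the second center is any point of that common interval. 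Given the intervals $I_i$, all of this costs $O(n)$.

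The real work, and the main obstacle, is computing the $2n$ endpoints $\alpha_i,\beta_i$ in $O(mn)$ total, i.e. $O(m)$ per uncertain point without sorting. First I would, for each location $p_{ij}$, compute its projection onto $\pi$: with the $O(1)$-time distance oracle on $T$ (available after the $O(mn)$ LCA preprocessing), the closest point of $\pi$ to $p_{ij}$ is at parameter $t_{ij}=\frac{1}{2}\big(d(p_{ij},c_1)-d(p_{ij},c_2)+L\big)$ and height $a_{ij}=d(p_{ij},c_1)-t_{ij}$, so $d(p_{ij},x_t)=a_{ij}+|t-t_{ij}|$ and therefore $g_i(t)=w_i\sum_j f_{ij}a_{ij}+w_i\sum_j f_{ij}|t-t_{ij}|$; this is $O(mn)$ overall. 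For a fixed $P_i$, the minimizer set of $g_i$ is the weighted-median interval of $\{t_{ij}\}$ with weights $f_{ij}$, found in $O(m)$ time by weighted-median selection; if $g_i$ exceeds $\lambda$ there then $I_i=\emptyset$. Otherwise $g_i$ is monotone on each side of that interval and $\beta_i$ is either $L$ or the unique root of $g_i(t)=\lambda$ on the increasing branch (symmetrically for $\alpha_i$ and $0$). To locate such a root in $O(m)$ without sorting I would run a prune-and-search over the breakpoints $t_{ij}$: maintain running partial sums of the $f_{ij}$ and $f_{ij}t_{ij}$ already committed to the left/right of the root; repeatedly pick the median $p$ of the still-undecided breakpoints, evaluate $g_i(p)$ in time proportional to the current number of undecided breakpoints using those partial sums, and discard half the breakpoints according to the sign of $g_i(p)-\lambda$; when at most one breakpoint remains, $g_i$ is linear across the root and we solve a single linear equation. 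The geometric decrease of the candidate set gives $O(m)$ per $P_i$, hence $O(mn)$ for all endpoints.

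Putting the pieces together: $O(mn)$ for the distance oracle and the projections, $O(mn)$ for all the intervals $I_i$, and $O(n)$ for the two-piercing decision and for reporting the two points; the total is $O(mn)$, and when the answer is feasible the two reported points lie on $\pi(c_1,c_2)$ as required. I expect the only genuinely delicate point to be the linear-time root-finding for $g_i(t)=\lambda$ --- arranging the incremental evaluation so the work telescopes to $O(m)$ rather than $O(m\log m)$, and handling degenerate cases where the minimizer of $g_i$ is a nondegenerate interval or where the root coincides with a breakpoint; everything else is routine bookkeeping.
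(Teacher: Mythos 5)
Your proposal is correct, and its overall skeleton matches the paper's: project the off-path locations onto the path, observe that convexity of $\Ed(P_i,\cdot)$ turns each constraint $w_i\Ed(P_i,x)\le\lambda$ into a subinterval $I_i$ of the path, and decide two-piercing greedily via the leftmost right endpoint (your residual test is equivalent to the paper's ``leftmost right endpoint and rightmost left endpoint'' criterion). Where you genuinely diverge is in the one step that carries all the difficulty, computing the $2n$ interval endpoints in $O(mn)$ total. The paper reassigns each off-path location to its attachment vertex on the path, notes that the path's vertices are already encountered in sorted order along the path, and then invokes the path-version machinery of~\cite{ref:XuTh23} as a black box to evaluate all the $\Ed(P_i,x)$'s and solve $\Ed(P_i,x)\le\lambda$. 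You instead give a self-contained construction: write $g_i(t)=w_i\sum_j f_{ij}a_{ij}+w_i\sum_j f_{ij}\lvert t-t_{ij}\rvert$, locate its minimizer by weighted-median selection, and find each root of $g_i(t)=\lambda$ by prune-and-search over the breakpoints with maintained partial sums, giving $O(m)$ per uncertain point without any use of sortedness. Your projection formulas are right (for a location hanging off the path at a vertex with parameter $t_v$ one indeed gets $t_{ij}=t_v$ and $a_{ij}=d(p_{ij},v)$), and the geometric decrease in the prune-and-search does telescope to $O(m)$. The trade-off: the paper's route is shorter on the page but leans on an external algorithm and on the incidental fact that path vertices come pre-sorted; yours is longer but elementary, independent of~\cite{ref:XuTh23}, and would still run in $O(mn)$ even if the locations were given in arbitrary order. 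Both are valid $O(mn)$ proofs of the lemma.
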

\begin{proof}
In the path-constrained version, all centers are restricted to be on a given path $\pi(u,u')$ of $T$. Denote by $V$ the set of vertices on $\pi(u,u')$ including $u, u'$. Let $\Gamma(V)$ be the set of all split subtrees of vertices in $V$ excluding those containing $u$ and $u'$. $V$ and $\Gamma(V)$ can be found in $O(|T|)$ time. 

Let $x$ be any point on $\pi(u,u')$. Because $\Ed(P_i,x)$ of each $P_i\in\calP$ is a convex and piece-wise linear function of $O(m)$ complexity. $\Ed(P_i,x)\leq\lambda$ leads an interval $I_i$ on $\pi(u,u')$. We say that an interval is pierced by a point if it contains this point. It is evident that the path-constrained decision problem is to determine whether at most two points can be found on $\pi(u,u')$ to pierce all $I_i$'s. If yes then $\lambda\geq\lambda^*$ and otherwise, $\lambda<\lambda^*$. As revealed in~\cite{ref:XuTh23} for the case of $T$ being a path, all $I_i$'s can be pieced by at most two points if and only if their leftmost right endpoint and rightmost left endpoint pierce all of them; clearly, this can be verified in $O(mn)$ time if all $I_i$'s are known. 

It thus remains to compute all $I_i$'s, which require us to determine $\Ed(P_i,x)$ for each $P_i\in\calP$. Create an array $D[1\cdots n]$ to store the distance sum of locations in $\Gamma(V)$ of each $P_i$ to vertices in $V$. Traverse every split subtree in $\Gamma(V)$ of each $v\in V$. During the traversal, for each location $p_{ij}$, we add $w_if_{ij}d(p_{ij}, v)$ to $D[i]$ and then reassign $p_{ij}$ to $v$. Let $F_i(<x)$ (resp., $F_i(>x)$) be the set of $P_i$'s locations on $\pi(u,u')$ to the left (resp., right) of $x$. Clearly, $\Ed(P_i,x) = D[i]+ w_i\sum_{p_{ij}\in F_i(<x)}f_{ij}d(p_{ij},x) + w_i\sum_{p_{ij}\in F_i(<x)}f_{ij}d(p_{ij},x)$. 
Since all locations of $\calP$ are at vertices of path $\pi(u,u')$ and thus sorted, we can utilize the approach~\cite{ref:XuTh23} for the path version to determine all $\Ed(P_i,x)$'s and then resolve all $\Ed(P_i,x)\leq\lambda$ in $O(mn)$ time. 

Thus, the lemma holds. \qed
\end{proof}



It follows that for the case where $x_1$ is to the right of $x_2$, we spend $O(|T_h|+n)$ time on computing a subpath $\pi$ of $\pi(c_1, c_2)$ that contains a peripheral-center edge, and at least a quarter of uncertain points of $\calP$ that can be pruned in our further rounds on $\pi$. 

Last, $x_1$ is exactly $x_2$. We first decide whether a center must be placed at $x_1$ by Lemma~\ref{lem:point} in $O(|T_h|+n)$ time. If yes, then the feasibility of $\lambda$ can be decided by Lemma~\ref{lem:basecase} in $O(mn)$ time. Otherwise, one split subtree of $x_1$ is returned and it must contain either $c_1$ or $c_2$. Assume it contains $c_2$. We shall compute a peripheral-center edge on $\pi(x_2, c_2)$ so that all uncertain points in $\calP'$ except for the one determining $x_2$ can be pruned. Similar to the above, we join a vertex $v'$ for $x_2$, set its flag as true since there must be a center on the connector subtree $T_h(v')$, and reset the entry in $I[1\cdots n]$ of that uncertain point determining $x_2$ as true.

As a consequence, for the case $C=2$, generally, in $O(|T|)$ time, either we reduce the problem into the case $C=1$, or we obtain a subpath $\pi$ on $\pi(c_1,c_2)$ that is known to contain a peripheral-center edge. For the former case, the above approach for the case $C=1$ is applied to prune at least a quarter of uncertain points of $\calP$ in $O(|T|)$ time. For the other case, supposing the two ending vertices of $\pi$ are $u$ and $u'$, we then prune these irrelevant uncertain points as follows. Traverse the connector subtrees $T_h(u)$ and $T_h(u')$, which can be obtained in $O(|T|)$ time. For each location $p_{ij}$ in $T(u)$ (resp., $T(u')$), we create a dummy vertex incident to $u$ (resp., $u'$) by a dummy edge of length $d(p_{ij},u)$ (resp., $d(p_{ij},u')$) if $I[i]$ is false. We perform the same procedure on each split subtree in $\Gamma(V)$ of each vertex $v$ on $\pi(u,u')/\{u,u'\}$ and additionally, we set $v$'s flag as true if its split subtrees in $\Gamma(V)$ have a true-flag vertex. It is not hard to see that in $O(|T|)$ time we obtain a tree $T^+$ so that $T^+$ contains at most $3n/4$ uncertain points and $T^+/\pi(u,u')$ are induced by dummy vertices. Clearly, computing a peripheral-center edge on $T$ is equivalent to computing a non-dummy peripheral-center edge on $T^+$. 

\subsubsection{The case $C>2$:} In general, $T_h$ has more than two connectors. We utilize an approach similar to~\cite{ref:WangCo17} to ``shrink'' $T_h$ until the problem is reduced to one of the previous two cases. 

A vertex $z$ of $T_h$ is called a \textit{connector-centroid} if each split subtree of $z$ has no more than $C/2$ connectors, which can be found in linear time~\cite{ref:WangCo17}. We first compute the connector-centroid $z$ of $T_h$ and then determine whether a center must be at $z$, and if not, which split subtree of $z$ contains a peripheral-center edge. These can be decided in $O(|T_h|+h\cdot n)$ time by applying Lemma~\ref{lem:point} to $z$. Generally, a split subtree is returned. Set $z$ as a connector on this subtree in $O(|T_h|+h\cdot n)$ time. Let $T_{h+1}$ be the obtained tree. Clearly, the size of $T_{h+1}$ is at most $|T_h|$ but it has no more than $C/2$ connectors. We perform the above procedure recursively on $T_{h+1}$. After at most $\log C$ steps, we obtain a subtree $T'$ with at most two connectors. The total time complexity is $O(\sum_{i=1}^{\log C}(|T_h|+n\cdot C/2^i))$, which is $O(|T|)$ due to $C\leq 1+\log m$ and $T_h\leq\frac{n}{2}$. It follows that the above $O(|T|)$ pruning approach is applied to $T'$ accordingly to prune at least a quarter of uncertain points from $\calP$. Consequently, a tree $T^+$ containing at most $3n/4$ uncertain points is achieved in $O(|T|)$ time so that computing a peripheral-center edge on $T_h$ is equivalent to computing a non-dummy peripheral-center edge on $T^+$.  

\subsection{Wrapping things up}
The above procedure gives an $O(|T|)$-time algorithm that computes a tree $T^+$ of at most $3n/4$ uncertain points and at most $3mn/4$ vertices, such that computing a peripheral-center edge on $T$ is equivalent to computing a non-dummy peripheral-center edge on $T^+$. Note that all dummy vertices on $T^{+}$ are leaves and flags of vertices are maintained. Because our Lemma~\ref{lem:point} and Lemma~\ref{lem:centerdetecting} are defined in Section~\ref{sec:lemmas} consider the situation where the given tree may contain dummy vertices and connectors. We thus continue the same procedure recursively on $T^+$ to compute a non-dummy peripheral-center edge. 

After $h-1$ rounds, the obtained tree $T^+_{h-1}$ consists of at most $(\frac{3}{4})^{h-1}\cdot n$ uncertain points and at most $(\frac{3}{4})^{h-1}\cdot mn$ vertices. It is not hard to see that performing the $h$-th round on $T^+_{h-1}$ takes $O(|T^+_{h-1}|)$ time. We stop until we obtain a tree $T'$ that contains $O(1)$ uncertain points or $T'$ consists of $O(1)$ non-dummy vertices, that is, after at most $\log n$ rounds. At this moment, a peripheral-center edge of $T$ can be computed in $O(mn)$ time by Lemma~\ref{lem:constant}. Thus, the total time complexity is $O(\sum_{i=1}^{\log n} (\frac{3}{4})^i\cdot mn )$ time, which is $O(mn)$. 

\begin{lemma}\label{lem:constant}
The non-dummy peripheral-center edge on $T'$ can be computed in $O(mn)$ time. 
\end{lemma}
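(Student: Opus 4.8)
Recall that the recursion halts exactly when $T'$ is small in one of two senses: either $T'$ keeps only $O(1)$ uncertain points of positive weight, or $T'$ has only $O(1)$ non-dummy vertices. In either case every dummy vertex of $T'$ is a leaf, $|T'|\le mn$, and, as in all earlier rounds, one traversal of $T'$ that correctly threads the connector arrays through the computation---exactly as in the proof of Lemma~\ref{lem:basecase}---evaluates in $O(mn)$ time any probability sum or value $\Ed(P_i,x)$ we need. The plan is then to reduce the search to a constant number of candidate non-dummy edges and to certify each one by such an $O(mn)$-time pass; crucially we do not recurse, so no extra logarithmic factor appears.

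First I would handle the case that $T'$ has $O(1)$ non-dummy vertices. Deleting the dummy leaves leaves a tree with $O(1)$ vertices and hence $O(1)$ edges, and the desired non-dummy peripheral-center edge is one of them, with both endpoints among the $O(1)$ non-dummy vertices. I would root $T'$ at a non-dummy vertex and, in one traversal, accumulate for every non-dummy vertex $v$ and every $P_i\in\calP$ the probability sum of $P_i$ in the split subtree of $v$ away from the root; this costs $O(mn)$, since each location touches only the $O(1)$ non-dummy ancestors of its host vertex and each connector contributes through its array. By Lemma~\ref{lem:medians} this gives, for each candidate edge $e=e(u,v)$, the partition of $\calP$ according to whether a median lies on the $u$-side or the $v$-side; along $e$ all the corresponding functions $\Ed(P_i,\cdot)$ are monotone, so the argument in the proof of Lemma~\ref{lem:basecase} decides which side, if any, forces the center on $e$. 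Equivalently, I would invoke Lemma~\ref{lem:point} at the two endpoints of each candidate: $e$ is a peripheral-center edge precisely when Lemma~\ref{lem:point}, applied at each endpoint, reports that a peripheral-center edge lies in the split subtree containing $e$. Scanning the $O(1)$ candidates costs $O(mn)$ and returns the one that passes.

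Next I would handle the case of $O(1)$ uncertain points of positive weight, say $P_{i_1},\dots,P_{i_k}$ with $k=O(1)$. In $O(mn)$ time I would compute, via Lemma~\ref{lem:medians}, the medians $p^*_{i_1},\dots,p^*_{i_k}$ together with the convex piecewise-linear functions $\Ed(P_{i_j},\cdot)$. Let $T'_m$ be the minimum subtree of $T'$ spanning these medians; there is always a peripheral-center edge on $T'_m$ (root the bottom-up greedy of Section~\ref{sec:decision} at a leaf of $T'_m$: the first center it is forced to place lies on an edge of $T'_m$). Since $T'_m$ has at most $k$ leaves, it decomposes into $O(1)$ maximal paths, and on each such path the candidate edge is isolated from the breakpoint structure of $\max_j\Ed(P_{i_j},\cdot)$ restricted to that path, in time linear in its length; over all $O(1)$ paths this is $O(|T'_m|)=O(mn)$. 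I then verify each of the $O(1)$ candidates as in the previous case (or via Lemma~\ref{lem:point}) and return the peripheral-center edge.

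The step I expect to be the main obstacle is staying within $O(mn)$ without recursing: a single generic shrinking step via Lemma~\ref{lem:point} on $T'$ costs $\Omega(|T'|\log m)$, which is not $O(mn)$ when $m$ greatly exceeds $n$, so the recursion of the previous sections cannot simply be continued. The resolution is to exploit the two stopping conditions---$O(1)$ uncertain points force $T'_m$ to have $O(1)$ maximal paths, whereas $O(1)$ non-dummy vertices leave $O(1)$ candidate edges outright---so that in both situations only $O(1)$ candidates survive and each is certified by one $O(mn)$-time traversal. What remains is the routine bookkeeping of folding the connector arrays and the dummy leaves into the probability-sum and expected-distance computations, already carried out in the proof of Lemma~\ref{lem:basecase}.
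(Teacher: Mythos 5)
Your handling of the case where $T'$ has $O(1)$ non-dummy vertices is essentially the paper's: the paper strips the dummy leaves to get a tree $T''$ with $O(1)$ vertices and runs the centroid/Lemma~\ref{lem:point} recursion for $O(1)$ steps at $O(mn)$ each, which amounts to the same thing as your enumeration of the $O(1)$ candidate non-dummy edges with an $O(mn)$-time certification of each. For the other case, however, you have talked yourself into a detour on the basis of a miscalculation. When $T'$ retains only $O(1)$ uncertain points we have $|T'|=O(m)$, every connector's information arrays have size $O(1)$, and a call to Lemma~\ref{lem:point} costs $O(|T_h|+t\cdot n_h)=O(|T_h|+\log m)$, not $\Omega(|T'|\log m)$; the paper therefore simply continues the centroid recursion on $T'$, and the cost telescopes to $O(m)$ because the tree halves at each step. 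The obstacle you name does not exist, and the paper's resolution is just to keep doing what the earlier rounds did, which has become cheap.

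The median-spanning-subtree route you substitute in that case is plausible in outline but has concrete gaps as stated. First, $T'$ may carry one true-flag vertex recording that a center must live in an already-pruned part of $T$; if the $O(1)$ surviving uncertain points do not by themselves force a center anywhere under $\lambda$, the peripheral-center edge is the one pointing toward that flagged vertex and need not lie on $T'_m$, so the claim ``there is always a peripheral-center edge on $T'_m$'' fails unless the flag is folded into the argument. Second, surviving uncertain points can have locations, hence medians, at dummy leaves, so $T'_m$ may contain dummy edges and you still owe an argument that the edge you extract is non-dummy. Third, ``the candidate edge is isolated from the breakpoint structure of $\max_j\Ed(P_{i_j},\cdot)$'' is not yet a procedure: on each maximal path you need, per edge, the side whose medians lie behind it and the point where the upper envelope of that side's increasing functions reaches $\lambda$; this is doable with $O(1)$ uncertain points, but it is exactly the bookkeeping you wave at, and it is what the paper's uniform recursion performs implicitly. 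None of these is fatal, but each needs to be repaired before the lemma is proved by your route.
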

\begin{proof}
On the one hand, $T'$ contains $O(1)$ uncertain points, and so $|T'| = O(m)$. Similar to the first pruning step, we compute the centroid $c$ of $T'$ and then apply Lemma~\ref{lem:point} to $c$, which can be done in $O(m)$ time. If a center must be placed at $c$, then we can determine the feasibility of $\lambda$ by Lemma~\ref{lem:basecase} in $O(mn)$ time. Otherwise, a split subtree of $c$ is returned that must contain a peripheral-center edge, and we set $c$ as a connector. Note that the size of the information arrays of a connector is $O(1)$. We recursively perform the procedure on the obtained subtree until an edge is obtained. This edge must be a non-dummy edge and it contains a peripheral center. So we return this edge. The total time complexity is $O(m)$. 

On the other hand, $T'$ contains $O(n)$ uncertain points, but $T'$ consists of $O(1)$ non-dummy vertices. We first compute the subtree $T''$ that is generated by removing all dummy vertices from $T'$ in $O(|T'|)$ time. We then perform the procedure similarly to the above on $T''$ to find a non-dummy peripheral-center edge on $T''$. Notice that during the procedure, Lemma~\ref{lem:point} is applied to $T$ instead of $T'$ so that every recursive step, we can avoid to set connectors. Consequently, after $O(1)$ recursive steps, a peripheral-center edge of $T''$ is achieved, which is also a peripheral-center edge of $T$. Clearly, the total running time is $O(mn)$ time. 

Thus, the lemma holds. \qed
\end{proof}

Once a peripheral-center edge is found, we call Lemma~\ref{lem:basecase} to decide the feasibility of $\lambda$ in an additional $O(mn)$ time. We thus have the following result. 

\begin{lemma}
The decision two-center problem can be solved in $O(mn)$ time. 
\end{lemma}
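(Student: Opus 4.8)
The plan is to assemble the two ingredients built in this section. The recursive pruning procedure above runs in $O(mn)$ time and outputs a peripheral-center edge of $T$ (or reports $\lambda<\lambda^*$ early), and Lemma~\ref{lem:basecase} converts that edge into the feasibility decision for $\lambda$, returning the two centers whenever $\lambda\geq\lambda^*$. So the proof reduces to certifying (i) correctness of the pruning procedure and (ii) that its running time telescopes to $O(mn)$, and then invoking Lemma~\ref{lem:basecase}.

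For correctness I would maintain the invariant that the current tree always contains a peripheral-center edge, realized by a non-dummy edge. By the peripheral-center observation, when $\lambda\geq\lambda^*$ some edge of $T$ carries a peripheral center determined solely by the uncertain points whose medians lie on one side of it. Each round discards only uncertain points that cannot constrain any peripheral-center edge of the retained subtree: either a probability-sum argument through Lemma~\ref{lem:medians} and Lemma~\ref{lem:point} keeps the corresponding flag true (so the point's median side is already committed), or the point is dominated on the relevant subpath by the single uncertain point realizing the extreme critical point there. Each discarded location is re-attached as a dummy leaf at the same distance to the surviving connector, so all distances relevant to the retained tree are preserved and a non-dummy peripheral-center edge of the reconstructed tree maps back to a peripheral-center edge of $T$. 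Whenever two distinct true-flag vertices are detected, three centers are forced and $\lambda<\lambda^*$ is reported, which is correct; otherwise the invariant is preserved through the first pruning step, through each of the cases $C=1$, $C=2$, $C>2$ (the last using the connector-centroid recursion, the $C=2$ case possibly falling back on the path-constrained subproblem of Lemma~\ref{lem:pathconstrained}), and down to the base case handled by Lemma~\ref{lem:constant}.

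For the running time I would account per round. The first pruning step performs at most $1+\log m$ centroid-splitting steps, the $h$-th costing $O(|T_{h-1}|+hn)$, which telescopes to $O(mn)$ since the round starts from a tree of size at most $|T|$ and $h=1+\log m$. The uncertain-point pruning step costs $O(|T|)$ in each of the three cases: $C>2$ telescopes over at most $\log C\leq\log(1+\log m)$ connector-centroid splits, $C=2$ spends $O(mn)$ at most once via Lemma~\ref{lem:pathconstrained}, and every remaining pruning and tree-reconstruction operation is linear. A round ends with a tree $T^{+}$ having at most $3n/4$ uncertain points, hence at most $3mn/4$ vertices; after round $h$ the tree has at most $(3/4)^{h}n$ uncertain points and $(3/4)^{h}mn$ vertices, so round $h+1$ costs $O((3/4)^{h}mn)$, and summing over the $O(\log n)$ rounds gives $O(mn)$. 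Lemma~\ref{lem:constant} closes the recursion in $O(mn)$ time, Lemma~\ref{lem:basecase} adds $O(mn)$, and the one-time lowest-common-ancestor preprocessing on $T$ is $O(mn)$, for $O(mn)$ overall.

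The hard part is not the time bound, which is a routine geometric-series argument once per-round linearity is granted, but verifying the pruning invariant in the case $C=2$: one has to check, across the subcases distinguished by the relative order of the extreme critical points $x_1$ and $x_2$ on $\pi(c_1,c_2)$ and by the flags of $c_1$ and $c_2$, that the retained subpath still contains a peripheral-center edge and that exactly the discarded uncertain points are irrelevant to every such edge, so the invariant survives. Everything else is bookkeeping on dummy vertices and on the information arrays $F_c[\cdot]$ and $D_c[\cdot]$.
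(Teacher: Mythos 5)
Your proposal is correct and follows essentially the same route as the paper: the lemma is stated there as a summary of Section~3, obtained by combining the $O(mn)$-time recursive search for a peripheral-center edge (first pruning step of cost $O(\sum_i(|T_i|+i\cdot n))=O(mn)$ per round, a constant-fraction reduction in uncertain points per round, and the geometric series $O(\sum_i(3/4)^i mn)=O(mn)$ over rounds, closed by Lemma~\ref{lem:constant}) with one final application of Lemma~\ref{lem:basecase}. Your identification of the $C=2$ pruning invariant as the delicate correctness point matches where the paper itself spends most of its case analysis, so nothing essential is missing.
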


\subsection{Lemma~\ref{lem:point} and Lemma~\ref{lem:centerdetecting}}\label{sec:lemmas}
Let $T_h$ be a tree obtained after several recursive steps of the first pruning step in a round of our algorithm. Suppose $T_h$ contains $n_h$ uncertain points and $t$ connectors. Additionally, $T'$ may contain dummy vertices but all of them are leaves, and at most one vertex of $T'$ has a true flag. We have the following lemma. 

\begin{lemma}\label{lem:point}
Given any point $x$ on $T_h$, we can decide in $O(|T_h|+ t\cdot n_h)$ time whether a center must be placed at $x$, and if not, which split subtree of $x$ on $T_h$ contains a peripheral-center edge. 
\end{lemma}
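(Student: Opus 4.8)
The plan is to reduce the lemma, in a single traversal of $T_h$, to the values $\Ed(P_i,x)$ together with the position of each median $p^*_i$ relative to $x$, and then to conclude by a short case analysis on how many split subtrees of $x$ are forced to contain a center.

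\textbf{Locating the medians and expected distances.} Root $T_h$ at $x$, first inserting $x$ as a degree-two vertex if it lies in the interior of an edge, and run one DFS. For each split subtree $T'_k$ of $x$ we would like, for every $P_i$, its probability sum inside $T'_k$ (summing $f_{ij}$ over ordinary and dummy locations of $P_i$ in $T'_k$ and, for every connector $c$ in $T'_k$, the value $F_c[i]$); but by part~1 of Lemma~\ref{lem:medians} at most one $T'_k$ can make this sum exceed $0.5$, so a single array $M[1\dots n_h]$ suffices. While descending into $T'_k$ we accumulate into a scratch array (adding $f_{ij}$ at each ordinary or dummy location, and at each connector $c$ adding $F_c[i]$ for all $i$ in one pass over its length-$n_h$ array); on leaving $T'_k$ we set $M[i]\leftarrow T'_k$ for every touched $i$ whose accumulated value is at least $0.5$, recording whether it is exactly $0.5$, and we reset precisely the touched entries. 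By parts~2 and~3 of Lemma~\ref{lem:medians}, any $P_i$ never assigned a subtree, and any $P_i$ whose accumulated value reached exactly $0.5$, satisfies $x=p^*_i$. In the same DFS we record $d(y,x)$ for every vertex $y$ of $T_h$ and compute $\Ed(P_i,x)$ as $\sum w_if_{ij}\,d(p_{ij},x)$ over ordinary and dummy locations plus $\sum_{c}(D_c[i]+w_iF_c[i]\,d(c,x))$ over connectors. Each of the $t$ connector arrays is scanned $O(1)$ times, the resets are charged to the work that created the touched entries, and the rest is linear in $|T_h|$, so the total time is $O(|T_h|+t\cdot n_h)$.

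\textbf{The decision.} Call a split subtree $T'_k$ of $x$ \emph{demanding} if some $P_i$ with $M[i]=T'_k$ has $\Ed(P_i,x)>\lambda$. Since $\Ed(P_i,\cdot)$ is convex and increases along every path leaving $p^*_i$, the set $\{y:\Ed(P_i,y)\le\lambda\}$ is then a subtree strictly inside $T'_k$, so any center covering such a $P_i$ --- and in particular the peripheral-center edge decided by the uncertain points with medians in $T'_k$ --- lies inside $T'_k$. From $M$ and the values $\Ed(P_i,x)$ we obtain, in $O(n_h)$ time, the number of demanding subtrees and whether any $P_i$ with $x=p^*_i$ has $\Ed(P_i,x)>\lambda$. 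If some such $P_i$ exists, or if at least three subtrees are demanding, then no two centers cover $\calP$ and $\lambda<\lambda^*$. If at most one subtree is demanding and $\max_i\Ed(P_i,x)\le\lambda$, then a single center at $x$ already covers $\calP$; more generally, whenever the decision can be settled by treating $x$ as a peripheral center (naming an edge incident to $x$ as a peripheral-center edge with the center at $x$, deactivating the covered points, and invoking Lemma~\ref{lem:basecase}), we report the ``center at $x$'' outcome. If exactly one subtree $T'_k$ is demanding and $x$ is not forced, we return $T'_k$ and recurse inside it. If exactly two subtrees $T'_a,T'_b$ are demanding, each must receive a center, so we return whichever of them does not contain the at most one true-flag vertex and set the flag of $x$ to true; after $x$ becomes a connector of the returned subtree, its pruned side --- which now also absorbs the other demanding subtree --- is thereby recorded as still needing a center, which is exactly the ``more than one such split subtree'' situation flagged before the lemma. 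Every branch is decided from $M$, the $\Ed(P_i,x)$ values, and at most one further traversal of $T_h$, hence within $O(|T_h|+t\cdot n_h)$.

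\textbf{Main obstacle.} The time bound is routine once one observes that part~1 of Lemma~\ref{lem:medians} makes the single array $M$ sufficient, so the $O(|T_h|\cdot n_h)$ cost of per-split-subtree storage is avoided. The delicate part is the correctness of the case analysis: characterizing exactly when a center is \emph{forced} at $x$ rather than confined to a namable split subtree, and propagating connectors, dummy leaves and the single true flag so that ``a peripheral-center edge on $T$'' stays equivalent to ``a non-dummy peripheral-center edge on the returned subtree''.
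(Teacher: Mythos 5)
Your construction of the arrays (a single median-assignment array justified by part~1 of Lemma~\ref{lem:medians}, per-subtree accumulation with resets charged to the touched entries, and connector arrays folded in via $F_c$ and $D_c$) is essentially the paper's own computation, and your time analysis matches. The gap is in the case analysis. First, the true-flag subtree must be counted as a candidate \emph{alongside} the demanding subtrees, not treated separately: the flag records that a previously pruned connector subtree hanging off that vertex already requires a center, and the uncertain points that forced it have been pruned, so that subtree will generally \emph{not} be demanding under your $\Ed(P_i,x)>\lambda$ test. Consequently, with one demanding subtree plus a distinct non-demanding true-flag subtree you return the demanding one but never set $x$'s flag, losing the record that the other side still needs a center; and with two demanding subtrees plus a distinct true-flag subtree you return one of them instead of declaring $\lambda<\lambda^*$ (three subtrees each needing a center). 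The paper avoids both errors by unconditionally inserting the true-flag subtree into the candidate list $\Gamma$ before adding the demanding ones and then branching on $|\Gamma|$.

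Second, your criterion for ``a center must be placed at $x$'' is left as ``whenever the decision can be settled by treating $x$ as a peripheral center,'' which is not an algorithm. The precise condition the paper uses is the boundary case $\tau(\calP_k,x)=\lambda$: the peripheral center determined by the points with medians in $T^k_h$ is exactly at $x$ when the maximum weighted expected distance over $\calP_k$ equals $\lambda$ (and is strictly inside $T^k_h$ only when it exceeds $\lambda$), with the analogous equality tests for the points whose median is at $x$ and for the probability-sum-exactly-$0.5$ points straddling two subtrees. Since your ``demanding'' test is strict, these equality cases fall through your branches silently (e.g.\ one subtree with $\tau=\lambda$ and another with $\tau>\lambda$ would cause you to recurse into the latter and miss that $x$ itself must host a center). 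You correctly flag both issues as ``the delicate part,'' but the lemma's content is precisely that delicate part, so it needs to be carried out.
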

\begin{proof}
If $x$ is in the interior of any dummy edge in $T_h$ or it is a dummy vertex, which can be known in $O(1)$ time, then only the split subtree containing non-dummy edges has peripheral-center edges. (Note that any point $x$ interior of a dummy edge has only one split subtree containing non-dummy edges while all other split subtrees contain only dummy edges, and so is a dummy vertex.) This split subtree can be found in $O(|T_h|)$ time. 

In general, $x$ is in the interior of neither any non-dummy edge nor a dummy vertex. Let $T^1_h, \cdots, T^s_h$ be the split subtrees of $x$ on $T_h$. Let $\calP_k$ be the subset of uncertain points whose medians are interior of $T^k_h/\{x\}$. By Lemma~\ref{lem:medians}, we have that each $P_i\in\calP_k$ has its probability sum at least $0.5$ in $T^k_h$ but less than $0.5$ in every other split subtree. So, subsets $\calP_k$ for all $1\leq k\leq s$ are disjoint. Let $\calP^+_k$ be the subset of uncertain points so that each $P_i\in\calP^+_k$ has its probability sum in both $T^k_h$ and another split subtree of $x$ exactly equal to $0.5$. Define $\tau(\calP_k,x)=\max_{P_i\in\calP_k}w_i\Ed(P_i,x)$. 

The convexity of $\Ed(P_i,x)$ for any path leads the following observations. For any $T^k_h$ without any vertex of a true flag, if $\tau(\calP_k,x)>\lambda$ then $T^k_h$ must contain a peripheral-center edge; if $\tau(\calP_k,x)=\lambda$ and $\tau(\calP^+_k,x)\leq\lambda$ then one must place a peripheral center at $x$ to cover $\calP_k\cup\calP^+_k$; if $\tau(\calP^+_k,x)>\lambda$, due to $\lambda^*\geq\max_{1\leq i\leq n}\Ed(P_i,p^*_i)$, then we return $\lambda<\lambda^*$; otherwise, no peripheral-center edges are in $T^k_h$. Because $x$ has at most one split subtree that consists of a true-flag vertex. The split subtree with a true-flag vertex must have a peripheral-center edge if and only if every other split subtree does not contain peripheral-center edges. Additionally, if some uncertain point with its unique median at $x$ has its expected distance to $x$ equal to $\lambda$, then a center must be placed at $x$. By Lemma~\ref{lem:medians}, if an uncertain point has its unique median at $x$ then its probability sum on each split subtree of $x$ is less than $0.5$. Denote by $\calP''$ the subset of all such uncertain points. Clearly, $\calP''$ is $\calP/\cup_{k=1}{s}(\calP_k\cup\calP^+_k)$. We thus first compute $\tau(\calP_k,x)$ and $\tau(\calP^+_k,x)$ for each $1\leq k\leq s$ and $\tau(\calP'',x)$.  

Create four auxiliary arrays $L[1\cdots n_h]$, $R[1\cdots n_h]$, $A[1\cdots n_h]$, and $B[1\cdots n_h]$: $L[i]$ and $R[i]$ are used to determine which of $\calP_k$'s and $\calP^+_k$'s $P_i$ is in; $A[i]$ stores the probability sum of $P_i$ in the current split subtree; $B[i]$ stores $w_i\Ed(P_i,x)$. Initially, $L[1\cdots n_h]$ and $R[1\cdots n_h]$ are set as null, and $A[1\cdots n_h]$ and $B[1\cdots n_h]$ are all zero. 

Traverse each split subtree of $x$. During the traversal on each $T^k_h$, when we visit a vertex holding a location $p_{ij}$, we add $f_{ij}$ to $A[i]$ and $w_if_{ij}\cdot d(p_{ij},x)$ to $B[i]$, and we then check if $A[i]>=0.5$. If yes, then $P_i$ belongs to either $\calP_k$ or $\calP^+_k$, and we continue to check if $L[i] = null$. If so, we set $L[i] = T^k_h$, and otherwise, supposing $L[i] = T^j_h$, $P_i$ belongs to both $\calP^+_j$ and $\calP^+_k$ and thereby we let $R[i] = T^k_h$. All of these steps can be done in constant time. As we visit a connector $u$ of $T_h$, we scan its information arrays $F_u[1\cdot n_h]$ and $D_u[1\cdot n_h]$: For each $1\leq i\leq n_h$, we add $F_u[i]$ to $A[i]$ and add $D_u[i] + w_iF_u[i]\cdot d(u,x)$ to $B[i]$. We then decide which subset $P_i$ belongs to as the above in $O(1)$ time. Once we are done, we revisit $T^k_h$ again: For each location $p_{ij}$ at a vertex, we reset $A[i]=0$; for each connector $v$, we scan $F_v[1\cdot n_h]$ to reset $A[i]=0$ if $F_v[i]>0$. Clearly, it takes $O(|T_h|+ t\cdot n_h)$ time in total to process all split subtrees. 

After the above procedure, we have $\Ed(P_i,x) = B[i]$ for each $1\leq i\leq n_h$. Additionally, if $L[i]=null$ then $P_i$ has its unique median at $x$; if $L[i]=T^k_h$ and $R[i] = null$ then $P_i$ belongs to $\calP_k$; otherwise, supposing $L[i]=T^k_h$ and $R[i] = T^j_h$, $P_i$ belongs to both $\calP^+_j$ and $\calP^+_k$. 

We proceed to compute $\tau(\calP_k,x)$ and $\tau(\calP^+_k,x)$ for each $1\leq k\leq s$. For each $1\leq i\leq n_h$, if $L[i] = T^k_h$ and $R[i] = null$, we set $\tau(\calP_k,x)$ as $B[i]$ if $B[i]$ is larger, and if $L[i] = T^k_h$ and $R[i] = T^j_h$, we set both $\tau(\calP^+_k,x)$ and $\tau(\calP^+_j,x)$ as $B[i]$ if $B[i]$ is larger; if $L[i] = null$ then $P_i$ belongs to $\calP''$ and we set $\tau(\calP'',x)$ as $B[i]$ if $B[i]$ is larger. Clearly, this can be carried out in $O(n_h)$ time.  

Now we are ready to determine whether there must be a center at $x$, and if not, which split subtree contains peripheral-center edges. Create a list $\Gamma$ to store the split subtrees containing peripheral-center edges. Traverse each split subtree to find that split subtree containing a true-flag vertex in $O(|T_h|)$ time, and add it to $\Gamma$. Assume it is $T^j_h$. We proceed with checking if $\tau(\calP'',x)= \lambda$. If yes, then we return $x$ as a center, and otherwise, we return $\lambda$ is infeasible if $\tau(\calP'',x)>\lambda$. Next, we decide whether $\tau(\calP^+_k,x)>\lambda$ for any $1\leq k\leq s$, and if one exists then $\lambda$ is infeasible. Next, for each $1\leq k\neq j\leq s$, we test if $\tau(\calP_k,x)>\lambda$. If yes, then we add $T^k_h$ into the list $\Gamma$ in that it must contain a peripheral-center edge, and otherwise, if $\tau(\calP_k,x)=\lambda$, then a peripheral center must be placed at $x$ and thereby $x$ is returned. Last, we scan $L[1\cdot n_h]$ and $R[1\cdot n_h]$ to do the following: if $L[i]= T^k_h$ and $R[i]=T^j_h$, we test whether $B[i]=\lambda$ and if yes, we return $x$ as a center in the case of $\tau(\calP_k,x)<\lambda$ and $\tau(\calP_j,x)\lambda$. Clearly, all these can be carried out in $O(|T_h|+n_h)$ time. 

In general, we obtain a list $\Gamma$ of $x$'s split subtrees that must contain centers. Depending on the size of $\Gamma$, there are several cases to consider. If its size is larger than two, then $\lambda$ is infeasible and we are done. If its size is equal to two, then at least one of them contains no vertex with a true flag. We thus return that split subtree, which can be found in $O(|T_h|)$ time. Notice that in this situation we also set the flag of $x$ as true in that it has two split subtrees containing centers. If $x$ is not a vertex of $T$, we join $x$ to $T_h$ and set its flag as true, which can be done in constant time. If $\Gamma$ consists of only one split subtree then we return this split subtree immediately. Otherwise, $\lambda$ is much larger than $\lambda^*$ and so we return $x$ as a center. 

Overall, in $O(|T_h|+ t\cdot n_h)$ time, we can decide whether a center must be placed at $x$, and if not, which split subtree of $x$ contains a peripheral-center edge. Thus, the lemma holds. \qed  
\end{proof} 

Furthermore, we have the following result for solving the peripheral-center detecting problem on the obtained tree $T^+$ after several rounds of the above algorithm. So, $T^+$ has $n^+$ uncertain points and $|T^+| = mn^+$. All dummy vertices of $T^+$ are leaves, and $T^+$ may contain at most one true-flag vertex. Note that no connectors are on $T^+$. 

Given are a set $Y$ of points $y_1, \cdots, y_t$ on $T^+$ and a set $\Gamma$ of split subtrees $T^+_1, \cdots, T^+_s$ of points in $Y$ on $T^+$. The peripheral-center detecting problem is to decide which split subtree of $\Gamma$ contains a peripheral-center edge. 

\begin{lemma}\label{lem:centerdetecting}
The peripheral-center detecting problem can be solved in $O(|T^+|)$ time. 
\end{lemma}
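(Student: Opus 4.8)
The plan is to replay the argument of Lemma~\ref{lem:point}, but run it on all the split subtrees of $\Gamma$ at once, using the constant-time distance queries provided by the preprocessing of Section~\ref{sec:pre} to keep the whole computation linear in $|T^+|$.

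First I would preprocess the input. For each subtree $T^+_\ell\in\Gamma$ I locate the unique point $y_{k(\ell)}\in Y$ incident to $T^+_\ell$ (the point of $Y$ of which $T^+_\ell$ is a split subtree); since the subtrees of $\Gamma$ are pairwise disjoint, this is a single $O(|T^+|)$-time traversal of $T^+$. If some $y_k$ is a dummy vertex or lies in the interior of a dummy edge, then, exactly as in Lemma~\ref{lem:point}, only the split subtree of $y_k$ carrying non-dummy edges can contain a peripheral-center edge, so I discard from $\Gamma$ the other subtrees hanging off such a $y_k$ during the same traversal. I also note, in this traversal, the unique true-flag vertex of $T^+$ (if any) and the subtree of $\Gamma$ that contains it.

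Next I would classify the uncertain points and evaluate the relevant expected distances. With a single array $A[1\cdots n^+]$ I traverse each $T^+_\ell$ and accumulate into $A[i]$ the probability sum of $P_i$ inside $T^+_\ell$; whenever $A[i]\ge 0.5$ I record $T^+_\ell$ as a relevant subtree of $P_i$, together with whether this sum is $>0.5$ or exactly $0.5$, and after $T^+_\ell$ is processed I clear the touched entries of $A$ by re-traversing $T^+_\ell$, exactly as in the proof of Lemma~\ref{lem:point}; this costs $O(|T^+|)$ overall. By Lemma~\ref{lem:medians} and the pairwise disjointness of the subtrees of $\Gamma$, each $P_i$ has either at most one relevant subtree $T^+_{\ell(i)}$ with probability sum $>0.5$ (in which case $p^*_i$ is interior to $T^+_{\ell(i)}$), or the $\mathcal{P}^+$-type situation of two subtrees with probability sum exactly $0.5$, or else $p^*_i$ lies at an anchor $y_k$ or outside $\bigcup\Gamma$; in every case $P_i$ has a single relevant anchor $y_{k(\ell(i))}$. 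I then traverse $T^+$ once more: for each location $p_{ij}$, if $P_i$ has a relevant anchor $y=y_{k(\ell(i))}$, I add $w_if_{ij}\cdot d(p_{ij},y)$ to an entry $B[i]$ using the constant-time distance query of Section~\ref{sec:pre}, so that afterwards $B[i]=w_i\Ed(P_i,y_{k(\ell(i))})$; since each location is touched once, this is $O(|T^+|)$ time. A final scan of $B[\cdot]$ produces $\tau(\calP_\ell,y_{k(\ell)}):=\max_{P_i\in\calP_\ell}w_i\Ed(P_i,y_{k(\ell)})$ and the analogous $\tau(\calP^+_\ell,y_{k(\ell)})$ for every $\ell$ in $O(n^+)$ time, where $\calP_\ell$ is the set of uncertain points whose median is interior to $T^+_\ell$ and $\calP^+_\ell$ the set having probability sum $0.5$ in $T^+_\ell$ and in a second subtree of $\Gamma$. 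With these values the decision is made exactly as in Lemma~\ref{lem:point}: if $\tau(\calP^+_\ell,y_{k(\ell)})>\lambda$ for some $\ell$, or an uncertain point whose unique median is an anchor already has weighted expected distance exceeding $\lambda$ there, then $\lambda<\lambda^*$ because the minimum weighted expected distance of $P_i$ is $w_i\Ed(P_i,p^*_i)$; otherwise a subtree $T^+_\ell$ not containing the true-flag vertex is reported as containing a peripheral-center edge precisely when $\tau(\calP_\ell,y_{k(\ell)})>\lambda$ (if $\tau(\calP_\ell,y_{k(\ell)})=\lambda$ a peripheral center is forced at $y_{k(\ell)}$, which is returned as a center and flagged as in Lemma~\ref{lem:point}), the subtree possibly containing the true-flag vertex is reported iff no other subtree qualifies, and if two or more subtrees of $\Gamma$ are pinned to contain centers then those centers lie in pairwise disjoint subtrees and the feasibility of $\lambda$ is settled directly as in Lemma~\ref{lem:basecase} within the same time bound. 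All steps run in $O(|T^+|)$ time, so the lemma holds.

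I expect the main obstacle not to be the traversal bookkeeping or the case analysis, which is a faithful replay of Lemma~\ref{lem:point}, but the single point of efficiency: evaluating $\Ed(P_i,y_{k(\ell(i))})$ for every relevant $P_i$ within an $O(|T^+|)$ budget, when the relevant anchor differs from point to point. It goes through because of two facts together — each uncertain point has a single relevant anchor, since its median lies in at most one of the pairwise disjoint subtrees of $\Gamma$ (Lemma~\ref{lem:medians}), so a single scan of $T^+$ attributes every location to exactly one $(P_i,\text{anchor})$ pair; and each location-to-anchor distance is available in $O(1)$ time from the preprocessing of Section~\ref{sec:pre}. \qed
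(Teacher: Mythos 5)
Your proposal is correct and follows essentially the same route as the paper's proof: the same arrays with the traverse-and-reset trick to classify each $P_i$ by the split subtree holding its median, a single pass over $T^+$ using the constant-time distance queries to accumulate $w_i\Ed(P_i,Y(T^+_k))$, and the same $\tau$-based decision. The only divergences are bookkeeping details — you additionally track the $\calP^+_k$-type points explicitly, while the paper, when exactly one subtree qualifies, makes an extra call to Lemma~\ref{lem:point} at its anchor to arbitrate between that subtree and $T^+/\Gamma$ and to set flags — neither of which changes the argument or the $O(|T^+|)$ bound.
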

\begin{proof}
Let $Y(T^+_k)$ be the point in $Y\cap T^+_k$. Recall that $\calP_k$ is the subset of uncertain points that have their probability sums in $T^+_k/Y(T^+_k)$ at least $0.5$. For each split subtree $T^+_k$, we will compute the value $\tau(\calP_k,Y(T^+_k))$ which is $\max_{P_i\in\calP_k}w_i\Ed(P_i,Y(T^+_k))$. For a split subtree without any true-flag vertex, if $\tau(\calP_k,Y(T^+_k))>\lambda$ then $T^+_k$ must contain a peripheral-center edge. In order to solve this problem, it suffices to compute $\tau(\calP_k,Y(T^+_k))$ for each $1\leq k\leq s$. 

\begin{figure}[h]
\centering
 \includegraphics[width=0.6\textwidth]{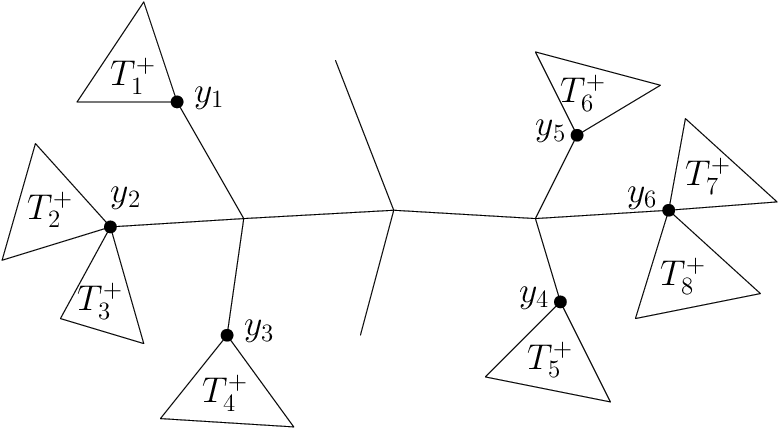}
 \caption{Illustrating an example for the peripheral-center detecting problem: $Y=\{y_1, y_2, \cdots, y_6\}$ and $\Gamma = \{T^+_1, T^+_2, \cdots, T^+_8\}$ shown with triangles.}
 \label{fig:edgedetecting}
\end{figure}

We first apply the lowest common ancestor data structure~\cite{ref:BenderTh00} to $T^+$ in $O(|T^+|)$ time so that for any two points $p$ and $q$ on $T^+$, distance $d(p,q)$ can be computed in constant time. We then traverse each split subtree of $\Gamma$ to compute $\calP_1, \cdots, \calP_s$ in the following. 

Create three auxiliary arrays $L[1\cdots n^+]$, $A[1\cdots n^+]$ and $B[1\cdots n^+]$. As in the proof for Lemma~\ref{lem:point}, $L[i]$ stores the split subtree in $\Gamma$ where $P_i$ has its probability sum at least $0.5$; $A[i]$ stores the probability sum of $P_i$ in the current split subtree; $B[i]$ stores $w_i\Ed(P_i, Y(T^+_k))$ if $P_i$ belongs to $\calP_k$. We initialize them as the above in Lemma~\ref{lem:point}. 

Next, we traverse each $T^+_k$ of $\Gamma$. For each location $p_{ij}$ on $T^+_k/Y(T^+_k)$, we add $f{ij}$ to $A[i]$. If $A[i]\geq 0.5$ then $P_i$ belongs to $\calP_k$, and thereby we set $L[i]=T^+_k$ in $O(1)$ time. (Note that $T^+$ has no connectors.) After traversing $T^+_k$, we revisit it to reset $A[i]$ as zero for each location $p_{ij}$. Clearly, after the $O(|T^+|)$-time preprocessing, the subset to which any given $P_i$ belongs can be known in $O(1)$ time.  

We proceed with traversing $T^+$ to compute $w_i\Ed(P_i, Y(T^+_k))$ for each $P_i\in\calP_k$ and each $1\leq k\leq s$. More specifically, for each location $p_{ij}$ on $T^+$, we check if $L[i]$ is null and if no, assuming $L[i]=T^+_k$, then we compute $w_if_{ij}\cdot d(p_{ij}, Y(T^+_k))$ and add this value to $B[i]$ in $O(1)$ time. Further, we compute $\tau(\calP_k,Y(T^+_k))$ for each $1\leq k\leq s$ by scanning $L[1\leq i\leq n^+]$. For each $1\leq i\leq n^+$, if $L[i] = T^+_k$, we set $\tau(\calP_k,Y(T^+_k))$ as $B[i]$ if $B[i]$ is larger. Clearly, it takes $O(|T^+|)$ time in total to compute all $\tau(\calP_k,Y(T^+_k))$. 

Now we are ready to determine which split subtree contains a peripheral-center edge. Clearly, for each split subtree $T^+_k$, if $\tau(\calP_k,Y(T^+_k))>\lambda$ and $T^+_k$ contains no vertices with a true flag, then $T^+_k$ must contain a peripheral-center edge. Thus, we add each such $T^+_k$ to a list $Q$. If $Q$ has more than two subtrees, then $\lambda$ is infeasible and thus we are done. If the size of $L$ is exactly two, then we pick the subtree that contains no vertex with a true flag, which can be found in $O(|T^+|)$ time. Assume it is $T^+_k$. Join a vertex to $T^+_k$ for $Y(T^+_k)$ if it is not a vertex. Set the flag of that vertex for $Y(T^+_k)$ as true. We then return this split subtree. If $Q$ contains only one split subtree, assuming it is $T^+_k$, then we apply Lemma~\ref{lem:point} to $Y(T^+_k)$ to decide which of $T^+_k$ and $T^+/\Gamma$ contains a peripheral-center edge in $O(|T^+|)$ time. Note that the flag of $Y(T^+_k)$ is set as true in Lemma~\ref{lem:point} if necessary.  Otherwise, $Q$ is empty and so $T^+/\Gamma$ contains a peripheral-center edge. 

Overall, we can determine which split subtree of $\Gamma$ contains a peripheral-center edge in $O(|T^+|)$ time.\qed
\end{proof}

\section{Computing centers $q^*_1$ and $q^*_2$}\label{sec:alg}
This section presents our algorithm that computes centers $q^*_1$ and $q^*_2$ in $O(mn\log mn)$ time. We say that an edge of $T$ containing a center is a \textit{critical} edge. Similar to the decision algorithm, our algorithm recursively computes each critical edge on $T$ with the assistance of the following key lemma. 

\begin{lemma}\label{lem:keylemma}
Given any point $x$ on $T$, we can decide whether $x$ is a center, and if not, which split subtree of $x$ contains a critical edge; further, if one center is at $x$ then centers $q^*_1$ and $q^*_2$ can be computed in $O(mn)$ time.
\end{lemma}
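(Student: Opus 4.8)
The plan is to follow the same template as Lemma~\ref{lem:point}, but work with the \emph{exact} objective value $\lambda^*$ implicitly rather than a fixed $\lambda$. Given a point $x$, removing $x$ yields split subtrees $T^1,\dots,T^s$; let $\calP_k$ be the set of uncertain points whose medians lie in the interior of $T^k/\{x\}$ (so each has probability sum $\ge 0.5$ there), and let $\calP''$ be those whose unique median is $x$. Since each $\Ed(P_i,\cdot)$ is convex along any path and increases moving away from $p_i^*$, a single center confined to $T^k$ (together with $x$ as its ``open vertex'') covers exactly the points whose medians lie in $T^k\cup\{x\}$; by the peripheral-center structure, one of the two centers covers all of $\calP_k$ for the subtrees $T^k$ that actually need a center. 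First I would compute, for each $k$, the quantity $\tau_k:=\tau(\calP_k,x)=\max_{P_i\in\calP_k} w_i\Ed(P_i,x)$, and also $\tau'':=\tau(\calP'',x)$, reusing verbatim the array-based $O(mn)$ traversal from the proof of Lemma~\ref{lem:point} (arrays $A$ for probability sums, $B$ for weighted expected distances, $L$ for subtree membership). Here there are no connectors, so the cost is simply $O(|T|)=O(mn)$.

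Next I would decide, from the $\tau_k$ values, where a critical edge lives. The key observation is: if at most one $\tau_k$ exceeds $\lambda^*$ we still do not know $\lambda^*$, so instead I argue comparatively. Let $\tau_{(1)}\ge\tau_{(2)}\ge\cdots$ be the sorted list. If two or more subtrees $T^k$ have $\tau_k$ strictly larger than the \emph{smallest feasible} value — equivalently, if there are $\ge 2$ indices with $\tau_k>\max(\tau'',\,\text{third-largest }\tau)$ — then each such subtree must host a center, so $s\ge 2$ such indices forces both centers off $x$, one in $T^{k_1}$ and one in $T^{k_2}$, and I return whichever of the two has a critical edge (both do; I can return either, updating the recursion). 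The honest statement is: $x$ is a center iff the two largest $\tau_k$'s are ``tied'' in the sense that any point covering one of the corresponding subtrees' critical demands cannot be pushed off $x$ without abandoning the other — concretely, iff after removing one maximizer the remaining maximum (over all $\calP_k$ with $k$ ranging and over $\calP''$) is $\ge$ the first maximizer, i.e. the two heaviest split subtrees' costs coincide and no third subtree needs a center. When that tie holds, both centers are forced and $x$ is one of them; the remaining center is determined by running the one-center algorithm of~\cite{ref:WangCo17} on $\calP$ restricted to the side of $x$ opposite the heavier demand (or, more simply, on $\calP$ with the weights of points covered by $x$ zeroed out), which gives $q_1^*,q_2^*$ in $O(mn)$ time, mirroring the computation inside Lemma~\ref{lem:basecase}. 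If instead exactly one $T^k$ is ``strictly heaviest and needs a center'', I return $T^k$; if some $\calP''$-member forces coverage at $x$ (its $w_i\Ed(P_i,x)$ equals the global max and its median is uniquely $x$), then $x$ is a center and I again finish via the one-center subroutine.

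The main obstacle — and the step that needs the most care — is formalizing ``$T^k$ must contain a critical edge'' \emph{without} knowing $\lambda^*$. In Lemma~\ref{lem:point} this was easy because $\lambda$ was given and one just compares $\tau_k$ to $\lambda$; here I must argue purely in terms of which split subtrees are ``forced.'' The right handle is: place the centers optimally; by the peripheral-center analysis one center $q_1^*$ is the critical point (on its critical edge) of the set of demands it covers, and every demand not covered by $q_1^*$ is covered by $q_2^*$. Since moving a center away from $x$ into $T^k$ only helps demands with medians in $T^k$, a standard exchange argument shows the split subtree(s) on which the per-subtree forced radius $\tau_k$ is \emph{largest} must each receive a center; if a unique such subtree exists, that's where a critical edge is and we recurse; if two tie, both centers are pinned and in particular $x$ is a center, so we stop and compute both. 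I would also double-check the boundary cases where $x$ lies in the interior of an edge (only two split subtrees, and the ``third-largest $\tau$'' is vacuous) and where some $P_i$ has probability sum exactly $0.5$ on two sides (it lies in two $\calP_k^+$ sets, handled exactly as in Lemma~\ref{lem:point} via the array $R$), and confirm each subcase runs in $O(mn)$, giving the claimed bound.
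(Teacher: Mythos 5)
Your reduction to per-subtree maxima $\tau_k=\max_{P_i\in\calP_k}w_i\Ed(P_i,x)$ is where the argument breaks. The tie criterion you propose --- ``$x$ is a center iff the two largest $\tau_k$'s coincide (and no third subtree needs a center)'' --- is neither sufficient nor necessary. For sufficiency, take $x$ with two split subtrees $T^1,T^2$ whose heaviest demands $P_1,P_2$ have medians deep inside $T^1$ and $T^2$ respectively and satisfy $w_1\Ed(P_1,x)=w_2\Ed(P_2,x)$; your rule pins a center at $x$, but the optimum places one center near $p_1^*$ and the other near $p_2^*$ and achieves a strictly smaller objective, so $x$ is not a center. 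For necessity, $x$ can be forced to be a center because the point attaining the \emph{second-largest global} value $w_2\Ed(P_2,x)$ has its median at $x$, or because several points tied at that value have medians scattered over distinct split subtrees; in neither situation need the top two $\tau_k$'s be equal. The underlying difficulty is that each $\tau_k$ is measured at $x$ while the relevant threshold is the unknown $\lambda^*$: in Lemma~\ref{lem:point} you could compare $\tau_k$ against the given $\lambda$, but here no comparison among the $\tau_k$'s alone determines which subtrees are forced, and your appeal to a ``standard exchange argument'' does not close this gap.

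The paper's proof instead orders the \emph{global} values $w_1\Ed(P_1,x)\ge w_2\Ed(P_2,x)\ge\cdots$ and branches on where $p_1^*$, $p_2^*$, and the medians of all points tied with $P_2$ at the value $w_2\Ed(P_2,x)$ lie: if $p_1^*=x$ both centers are at $x$; if $p_2^*=x$ (or some tied point has its median at $x$) one center is at $x$ and the other is at $p_1^*$; if the tied points' medians span two or more split subtrees besides $L[1]$, then $q_2^*$ must be at $x$; if they all lie in a single subtree $L[2]\ne L[1]$, then $L[1]$ and $L[2]$ each contain a critical edge; otherwise both critical edges are in $L[1]$. The distinction between ``both critical edges in one subtree'' and ``one critical edge in each of two subtrees'' is exactly what the recursion in Section~\ref{sec:alg} consumes, and returning only ``the unique heaviest subtree'' cannot produce it. Your $O(mn)$ bookkeeping (arrays for probability sums, weighted expected distances, and median membership) and the use of the one-center algorithm of~\cite{ref:WangCo17} once a center is pinned at $x$ do match the paper; it is the decision rule in the middle that must be replaced by the analysis of the two largest global values and their ties.
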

\begin{proof}
Let $\Gamma$ be the set of all split subtrees $T_1, \cdots, T_s $ of $x$ on $T$. Let $\calP_k$ be the set of uncertain points whose probability sum on $T_k$ are larger than $0.5$. We first compute $w_i\Ed(P_i,x)$ for every $1\leq i\leq n$ and store their expected distances in an auxiliary array $D[1\cdots n]$, which can be done in $O(mn)$ time. Create an additional array $L[1\cdots n]$ to store the split subtree of $\Gamma$ that contains $p_i^*$ of each $P_i$. Another array $F[1\cdots n]$ is initialized as zero for maintaining the probability sums. Initially, $L[1\cdots n]$ are all null and $F[1\cdots n]$ are all zero. We then traverse every split subtree $T_k$ of $\Gamma$. For each location $p_{ij}$ in $T_k$, we add $f_{ij}$ to $F[i]$ and if $F[i]>0.5$ then we set $L[i]$ as that split subtree $T_k$. After traversing $T_k$, we traverse $T_k$ again to reset $F[i]$ as zero for each location $p_{ij}$ in it. Clearly, if $L[i]$ is null then $P_i$'s median $p_i^*$ is at $x$. We proceed with computing the two largest expected distances among all $w_i\Ed(P_i,x)$'s. Suppose $w_1\Ed(P_1,x)\geq w_2\Ed(P_2, x)\geq w_3\Ed(P_3, x)\geq\cdots\geq w_n\Ed(P_n, x)$. We have the following cases.

\begin{enumerate}
    \item If the median $p^*_1$ of $P_1$ is at $x$, i.e., $L[1]$ is null, then we must have $\lambda^*=w_1\Ed(P_1, x)$. It is because $\lambda^*$ is at least $\max_{1\leq i\leq n}w_i\Ed(P_i, p^*_i)$. In this situation, we place both centers at $x$. This case happens if the weight $w_1$ of $P_1$ is relatively much larger than weights of all other uncertain points. 
    
    \item If $p^*_1$ is not at $x$ but $p^*_2$ is at $x$, then we can place a center, e.g., $q^*_2$, at $x$ and so center $q^*_1$ is at $p^*_1$. It follows that the objective value must be $\max_{1\leq i\leq n}w_i\Ed(P_i, p^*_i)$ in that if no centers are at $x$, then the objective value is at least $\max_{1\leq i\leq n}w_i\Ed(P_i, p^*_i)$. Hence, we can set $q^*_2=x$ and compute $q^*_1$ as follows. We first set the weight of every uncertain point as zero except for that of $P_1$ and then call the one-center algorithm~\cite{ref:WangCo17} to compute the center on $T$ with respect to $\calP$ in $O(mn)$ time.
    
    \item If there exists an uncertain point, e.g., $P_3$, so that $w_3\Ed(P_3,x) = w_2\Ed(P_2,x)$ and its median is at $x$, then we can place $q^*_1$ at $p^*_1$ and $q^*_2$ at $x$ and $\lambda^* = \max_{1\leq i\leq n}w_i\Ed(P_i, p^*_i)$. It is easy to see that we can decide in $O(n)$ time if such an uncertain point exists by scanning $D[1\cdots n]$ and $L[1\cdots n]$. 
   
    \item If $p^*_1$ and $p^*_2$ are on different split subtrees, i.e., $L[1]\neq L[2]$, then we have $q^*_1\in L[1]$ and $q^*_2\in x\cup L[2]$. We first decide whether $q^*_2$ is at $x$. We scan $D[1\cdots n]$ to find uncertain points that are of their expected distances at $x$ equal to $w_2\Ed(P_2,x)$. For each such uncertain point, we check whether their medians are exterior of the split subtree containing $p^*_2$. If yes then $q^*_2$ is at $x$, and we then compute $q^*_1$ as the above in an additional $O(mn)$ time. Otherwise, such uncertain points do not exist. Thus, $L[1]$ contains a critical edge and so does $L[2]$. 

    \item If $p^*_1$ and $p^*_2$ are on the same split subtree then a center, e.g., $q^*_1$, must be on $L[1]$. We further decide whether $q^*_2$ is at $x$. If there exist more than one uncertain points, e.g., $P_3$ and $P_4$, with $w_3\Ed(P_3,x)=w_4\Ed(P_4,x)=w_2\Ed(P_2,x)$ but $p_2^*$, $p_3^*$ and $p_4^*$ are on different split subtrees, then $q^*_2$ must be at $x$. It thus follows that $q^*_1$ is at $p^*_1$ and $\lambda = \max_{1\leq i\leq n}w_i\Ed(P_i, p^*_i)$. If all uncertain points with their expected distance equal to $w_2\Ed(P_2,x)$ have their medians on the same split subtree rather than $L[1]$, then $q^*_2$ must be in the interior of that split subtree. Otherwise, these uncertain points have their medians on $L[1]$ or $w_1\Ed(P_1, x) \geq w_2\Ed(P_2, x) > w_3\Ed(P_3, x) \geq \cdots \geq w_n\Ed(P_n, x)$. In this situation, both $q^*_1$ and $q^*_2$ are in the interior of $L[1]$, i.e., $L[1]$ contains both critical edges. 
    
    For this case, we first scan $D[1\cdots n]$ to find uncertain points whose expected distances equal $w_2\Ed(P_2,x)$ but with their medians not on $L[1]$. We create an auxiliary variable $i_1$ to maintain the index of such an uncertain point, and initialize it as $2$. For each $3\leq i\leq n$, we first check whether $D[i] = w_2\Ed(P_2, x)$ and $L[i]\neq L[2]$, and if both are yes, we then check if $i_1$ equals $2$. If it is $2$, then the first uncertain point meeting the above condition is found and thereby we set $i_1 = i$; otherwise, such an uncertain point has already been found and so we continue to compare $L[i_1]$ and $L[i]$. If they are not equal, then there exist more than two uncertain points that have their expected distances at $x$ equal to $w_2\Ed(P_2, x)$ but their medians on different split subtrees, i.e., the first condition holds. Thus, we set $q^*_2 = x$ and so $q^*_1$ is at $p^*_1$, which can be computed in $O(mn)$ time. After scanning $D[1\cdots n]$, we check if $i_1\neq 2$. If yes, then there are only two uncertain points $P_2$ and $P_{i_1}$ so that $w_2\Ed(P_2,x)=w_{i_1}\Ed(P_{i_1},x)$ and their medians are on the different split subtrees. Hence, center $q^*_2$ is in the interior of the split subtree containing $P_{i_1}$'s median, and so we return both $L[1]$ and this split subtree. Otherwise, $L[1]$ contains both centers and is returned. 
\end{enumerate}

It is not hard to see that in $O(mn)$ time we can decide whether $x$ is a center, and if not, which split subtree of $x$ contains a critical edge. \qed
\end{proof}

At the beginning, $T$ is known to contain a critical edge and so we let $T_0=T$. We first compute the centroid $c$ of $T_0$ in $O(|T_0|)$ time and apply Lemma~\ref{lem:keylemma} to $c$ in $O(mn)$ time. Either we obtain two adjacent vertices of $c$ so that the two corresponding split subtrees of $c$ each must have a critical edge, or only one adjacent vertex of $c$ is returned so that both critical edges are in that split subtree of $x$ containing this vertex. For the former case, we set $T_1$ as either subtree, which can be done in $O(|T_0|)$ time, and let the flag of $c$ on $T_1$ as true. In the other case, we let $T_1$ be the only split subtree. Clearly, $|T_1|\leq |T_0|/2$. We continue to search in $T_1$ for a critical edge. First, compute the centroid $c$ of $T_1$ in $O(|T_1|)$ time, and then apply Lemma~\ref{lem:keylemma} to $c$ on $T_1$, which always takes $O(mn)$ time. If two split subtrees are obtained, then the one without any true-flag vertex must contain a critical edge. We thus let $T_2$ be this subtree and set $c$'s flag as true on $T_2$. Otherwise, let $T_2$ be the only split subtree returned. Clearly, $|T_2|\leq |T_1|/2$. We recursively find a critical edge on $T_2$. Clearly, the obtained subtree $T_h$ of the $(h-1)$-th recursive step consists of at most $|T_0|/2^{h-1}$ vertices. Since Lemma~\ref{lem:keylemma} always runs in $O(mn)$ time, the time complexity of each recursive step is $O(mn)$ time. 

After at most $\log mn$ steps, an edge of $T$ remains and it must contain a center. Denote by $e^*_1$ this critical edge. We then adapt the above procedure to find the other critical edge $e^*_2$ on $T$. The only difference is that every recursive step, if two split subtrees are returned by Lemma~\ref{lem:keylemma}, then we always consider the one excluding $e^*_1$. Therefore, $e^*_2$ can be obtained after at most $\log mn$ recursive steps, i.e., in $O(mn\log mn)$ time. At this moment, we can compute $q^*_1$ and $q^*_2$ in $O(mn)$ time by the following lemma. 

\begin{lemma}\label{lem:base}
Given two critical edges $e^*_1$ and $e^*_2$ on $T$, we can find centers $q^*_1$ and $q^*_2$ in $O(mn\log n)$ time.  
\end{lemma}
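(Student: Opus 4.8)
The plan is to confine the entire computation to the two critical edges, where every expected‑distance function is linear, and then reduce the location of $q^*_1$ and $q^*_2$ to a logarithmic number of one‑dimensional optimizations. Write $e^*_1=e(a_1,b_1)$ and $e^*_2=e(a_2,b_2)$, with $b_1$ (resp. $b_2$) the endpoint of $e^*_1$ (resp. $e^*_2$) that lies on the path joining the two edges, and let $\pi(b_1,b_2)$ be that path. Removing $e^*_1$ and $e^*_2$ from $T$ (keeping their endpoints) yields three subtrees: $T_1$ containing $a_1$, $T_2$ containing $a_2$, and the middle subtree $T_0$ containing $b_1$ and $b_2$. A single traversal of $T$ computes, for every $P_i$, its probability sums in $T_1$, $T_0$, $T_2$ together with $\Ed(P_i,b_1)$ and $\Ed(P_i,b_2)$ in $O(mn)$ time — the same bookkeeping used in Lemma~\ref{lem:basecase}. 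From these numbers the functions $g^1_i(q_1):=w_i\Ed(P_i,q_1)$ for $q_1\in e^*_1$ and $g^2_i(q_2):=w_i\Ed(P_i,q_2)$ for $q_2\in e^*_2$ are linear and obtained in $O(1)$ each. Since each $g^r_i$ is monotone along its edge, for any $\lambda$ the set $\{q_1\in e^*_1: g^1_i(q_1)\le\lambda\}$ is a prefix or a suffix of $e^*_1$ (similarly on $e^*_2$), the direction being fixed by whether $p^*_i\in T_1$, $T_2$, or $T_0$ via Lemma~\ref{lem:medians}. Moreover, by the monotonicity of $\Ed(P_i,\cdot)$ along paths, a $P_i$ with $p^*_i\in T_1$ is necessarily covered by $q^*_1$ and one with $p^*_i\in T_2$ by $q^*_2$; only the points with median in $T_0$ can go to either center.

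For a fixed $\lambda$, write $J^1_i(\lambda)\subseteq e^*_1$ and $J^2_i(\lambda)\subseteq e^*_2$ for the two one‑sided intervals just described. Then $\lambda$ is feasible (restricted to these edges) iff some $q_1\in e^*_1$ satisfies $\bigcap_{i:\,q_1\notin J^1_i(\lambda)}J^2_i(\lambda)\neq\emptyset$. Sweeping $q_1$ along $e^*_1$ changes the index set $\{i: q_1\notin J^1_i(\lambda)\}$ monotonically, and because each $J^2_i(\lambda)$ is one‑sided on $e^*_2$, non‑emptiness of the intersection reduces to comparing a running maximum of left endpoints with a running minimum of right endpoints — the ``rightmost left endpoint / leftmost right endpoint'' criterion already exploited in Lemma~\ref{lem:pathconstrained} and \cite{ref:XuTh23}. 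After the $O(mn)$ preprocessing above and one $O(n\log n)$ sort of the $O(n)$ sweep events, each such test runs in $O(n)$ time.

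To determine $\lambda^*$ and the centers, I would first dispose of the degenerate case $\lambda^*=\max_{1\le i\le n}w_i\Ed(P_i,p^*_i)$ exactly as in the cases of Lemma~\ref{lem:keylemma} (recognized in $O(mn)$ time, with one center at a median and the other returned by the one‑center algorithm of \cite{ref:WangCo17}). Otherwise, the key structural claim is that in an optimal solution the split of the $T_0$‑points between $q^*_1$ and $q^*_2$ is a threshold along $\pi(b_1,b_2)$: order these points by where $p^*_i$ projects onto $\pi(b_1,b_2)$, and the points assigned to $q^*_1$ form the prefix nearest $b_1$. This leaves only $O(n)$ candidate assignments. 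For a fixed assignment, $q^*_1$ is the point of $e^*_1$ minimizing the upper envelope of the (at most $n$) monotone functions $g^1_i$ assigned to it — a valley‑shaped function, the pointwise maximum of its increasing members meeting that of its decreasing members at a single breakpoint — and symmetrically for $q^*_2$; each such one‑edge minimization costs $O(mn)$. Since the optimal $q^*_1$‑value is non‑decreasing and the optimal $q^*_2$‑value non‑increasing in the threshold, the objective is unimodal in the threshold, so a binary search over the $O(n)$ thresholds (the $T_0$‑points sorted once by projection position) pins down $\lambda^*$ and the corresponding $q^*_1,q^*_2$ in $O(\log n)$ evaluations, i.e. $O(mn\log n)$ overall; a final call of the Step‑2 routine checks that the chosen threshold is self‑consistent. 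Equivalently, one may run the $O(mn)$ decision routine of the previous paragraph inside a parametric search over the same events.

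The main obstacle is precisely this coupling through the $T_0$‑points, whose partition between the two centers depends on the unknown positions $q_1,q_2$. The argument hinges on proving that the partition is a single threshold along $\pi(b_1,b_2)$ — equivalently, that for two $T_0$‑points whose medians project in that order, the sign of $\Ed(P_i,q^*_1)-\Ed(P_i,q^*_2)$ is monotone — which requires the convexity of the expected‑distance functions (Section~\ref{sec:pre}) together with the peripheral‑center property of $e^*_1$ and $e^*_2$; and on verifying that the threshold selected by the search is self‑consistent. Everything else is linear‑time tree bookkeeping plus one logarithmic search, which is exactly where the $O(mn\log n)$ bound originates.
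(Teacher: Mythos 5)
Your setup (linear functions on the two critical edges, the observation that points with medians in $T_1$ must be served by $q^*_1$ and those in $T_2$ by $q^*_2$, and the $O(mn)$ bookkeeping) is sound, but the core of your search rests on a structural claim that is not proved and is in fact false in general: that the optimal partition of the $T_0$-points between the two centers is a threshold in the order of where their medians project onto $\pi(b_1,b_2)$, with the objective unimodal in that threshold. Which of $q_1,q_2$ is expectedly closer to a $P_i$ with median in $T_0$ is decided by comparing a convex function at two points on opposite sides of its minimum, and this depends on the slopes of $\Ed(P_i,\cdot)$ on either side of the minimum, not on where the minimum sits. Since the locations of $P_i$ need not cluster near its median, these slopes can be arbitrarily asymmetric: one can have $P_i$ whose median projects near $b_1$ but whose function is nearly flat toward $b_2$ and steep toward $b_1$ (so it prefers $q_2$), alongside a $P_j$ projecting near $b_2$ with the opposite asymmetry (so it prefers $q_1$). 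The set served by $q^*_1$ is then not a prefix, your $O(n)$ candidate assignments miss the optimal one, and the unimodality needed for the binary search has no justification either. The fallback you mention (``parametric search over the same events'') is not developed enough to rescue this, since the sweep events themselves depend on $\lambda$.

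The paper takes a different and simpler route that avoids reasoning about the assignment altogether: it argues that at the center determining $\lambda^*$, say $q^*_1\in e^*_1$, there must be two uncertain points, one with median on each side of $e^*_1$, whose weighted expected distances both equal $\lambda^*$ (otherwise $q^*_1$ could be slid to decrease the objective). Hence $\lambda^*$ lies among the $y$-coordinates of intersections of the $2n$ lines $y=w_i\Ed(P_i,x)$ restricted to $x\in e^*_1$ and $x\in e^*_2$. One then finds the lowest feasible intersection by the line-arrangement searching technique of \cite{ref:ChenAn13}, using the $O(mn)$ decision algorithm as the oracle, which costs $O(\log n)$ oracle calls and gives $O(mn\log n)$ overall, plus one final decision call at $\lambda^*$ to output $q^*_1$ and $q^*_2$. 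If you want to salvage your approach, you would need to replace the threshold claim with an identification of a finite candidate set for $\lambda^*$ (or a correct monotone parameter to search over); as written, the argument has a genuine gap.
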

\begin{proof}
First, we compute the shortest path $\pi$ from $e^*_1$ and $e^*_2$ in $O(|T|)$ time. Let $v_1, v_2, \cdots, v_s$ be all vertices on $\pi$ in order and denote their set by $V$. Assume that $e^*_1$ is edge $e(v_1, v_2)$ and $e^*_2$ is edge $e(v_{s-1}, v_s)$. Let $x$ be any point on $\pi$. We claim that $\lambda^*$ belongs to the set of values $w_i\Ed(P_i,x) = w_j\Ed(P_j,x)$ of all pairs $(P_i, P_j)$ for $x\in \{e^*_1, e^*_2\}$. 

Suppose that $q^*_1$ is on $e^*_1$ and $q^*_1$ is the one determining $\lambda^*$. Removing $e^*_1$ from $T$ generates two subtrees $T_1$ and $T_2$ that respectively includes $v_1$ and $v_2$. Clearly, $q^*_1$ covers all uncertain points whose medians are in $T_1$ under $\lambda^*$. As analyzed in the proof for Lemma~\ref{lem:basecase}, $\Ed(P_i, x)$ of each $P_i\in\calP$ linearly changes as $x$ moves on $e^*_1$ (resp., $e^*_2$) from $v_1$ (resp., $v_{s-1}$) to $v_2$ (resp., $v_s$). So, there must be an uncertain point $P_{i_1}$ with its median interior of $T_1$ and another uncertain point $P_{i_2}$ with its median interior of $T_2$ such that $w_{i_1}\Ed(P_{i_1}, q^*_1) = w_{i_2}\Ed(P_{i_2},q^*_1) = \lambda^*$. Otherwise, we can move $q^*_1$ towards $v_1$ or $v_2$ to reduce the objective value. Thus, our claim is true. 

The proof for Lemma~\ref{lem:basecase} also shows that all $\Ed(P_i, x)$'s for $x\in e^*_1$ (resp., $x\in e^*_2$) can be determined in $O(|T|)$ time. Consider $y=\Ed(P_i, x)$ of each $P_i\in\calP$ respectively for $x\in e_1^*$ and $x\in e_2^*$ in the $x,y$-coordinate system. We can see that values $w_i\Ed(P_i,x) = w_j\Ed(P_j,x)$ of all pairs $(P_i, P_j)$ for $x\in \{e^*_1, e^*_2\}$ belong to the set of the $y$-coordinates of all intersections of the $2n$ lines defined by $y=\Ed(P_i, x)$'s. It follows that $\lambda^*$ is the smallest $y$-coordinate of all intersections with feasible $y$-coordinates. To search for $\lambda^*$ among $y$-coordinates of all intersections, we construct in $O(n\log n)$ time a line arrangement $A$ for the $2n$ lines $y=\Ed(P_i, x)$. The $\lambda^*$ is the $y$-coordinate of the lowest vertex of $A$ with feasible $y$-coordinates. The line arrangement search technique~\cite{ref:ChenAn13} can be employed to find that lowest vertex in $A$ by using our decision algorithm as the oracle to decide the feasibility of the $y$-coordinate of a vertex. 

Then $\lambda^*$ is the $y$-coordinate of the lowest vertex of $A$ with feasible $y$-coordinates. The line arrangement search technique~\cite{ref:ChenAn13} can be employed to find that lowest vertex of $A$ by $O(\log n)$ calls on our decision algorithm. Thus, $\lambda^*$ can be computed in $O(mn\log n)$ time, and centers $q^*_1$ and $q^*_2$ can be computed by one additional call of our decision algorithm on $\lambda^*$. With $O(\log n)$ calls, the lowest vertex, i.e., $\lambda^*$, can be found in $O(mn\log n)$ time. Last, we perform an additional call on our decision algorithm with $\lambda = \lambda^*$ to find $q^*_1$ and $q^*_2$ in $O(mn)$ time. 

Therefore, $q^*_1$ and $q^*_2$ can be computed in $O(mn\log n)$ time. \qed
\end{proof}

As mentioned in Section~\ref{sec:pre}, any given general case can be reduced into a vertex-constrained case in $O(|T|+mn\log mn)$ time. We have the following result. 

\begin{theorem}\label{theorem:1}
The two-center problem of $n$ uncertain points on a tree $T$ can be solved in $O(|T|+ mn\log mn)$ time.
\end{theorem}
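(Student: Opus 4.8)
The plan is to assemble the final result from the two algorithmic pieces developed in the body of the paper plus the reduction to the vertex-constrained case. First I would invoke the reduction mentioned in Section~\ref{sec:pre} (following~\cite{ref:WangCo19}): any general instance of the two-center problem on $T$, where locations may lie in the interiors of edges, is transformed in $O(|T| + mn\log mn)$ time into an equivalent vertex-constrained instance in which every location sits at a vertex and every vertex carries at least one location. After this step, $|T| = \Theta(mn)$, so subsequent costs expressed in terms of $|T|$ and $mn$ coincide. Hence it suffices to show that the vertex-constrained case is solvable in $O(mn\log mn)$ time; adding the one-time reduction cost of $O(|T| + mn\log mn)$ then yields the stated bound.

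Next I would recall the decision algorithm. Section~\ref{sec:decision} culminates in the lemma stating that the decision two-center problem (given $\lambda$, decide whether $\lambda \ge \lambda^*$, and if so report two centers) is solvable in $O(mn)$ time: the algorithm computes a peripheral-center edge by the recursive pruning scheme (the first pruning step followed by the uncertain-point pruning step, iterated at most $\log n$ rounds, each round $O(|T|)$ time, terminating via Lemma~\ref{lem:constant}), and then applies Lemma~\ref{lem:basecase} to that edge. This gives us the linear-time oracle.

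Then I would turn to the optimization algorithm of Section~\ref{sec:alg}. Using Lemma~\ref{lem:keylemma} as the test at the centroid, the algorithm locates the first critical edge $e^*_1$ by a sequence of at most $\log mn$ centroid-halving steps, each costing $O(mn)$ (Lemma~\ref{lem:keylemma} runs in $O(mn)$ and the centroid of the current subtree is found in time linear in its size, which is $O(mn)$); it then locates the second critical edge $e^*_2$ by the same procedure, always discarding the side that does not contain $e^*_1$, in another $O(mn\log mn)$ time. With both critical edges in hand, Lemma~\ref{lem:base} computes $q^*_1$ and $q^*_2$ in $O(mn\log n)$ time by building the arrangement of the $2n$ lines $y = \Ed(P_i,x)$ over $x \in e^*_1 \cup e^*_2$ and running the line-arrangement search of~\cite{ref:ChenAn13} with the decision algorithm as oracle, plus one final decision call at $\lambda^*$. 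Summing all three phases gives $O(mn\log mn)$ for the vertex-constrained case, and together with the reduction cost the total is $O(|T| + mn\log mn)$.

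The only genuinely substantive work has already been discharged inside the excerpt — in particular establishing the linear-time decision procedure and the correctness of the critical-edge search via Lemma~\ref{lem:keylemma} — so for the theorem itself there is essentially no obstacle: it is a matter of correctly bookkeeping that the $\log n$ pruning rounds telescope to $O(mn)$, that the two critical-edge searches are $O(mn\log mn)$, and that the arrangement phase is $O(mn\log n) = O(mn\log mn)$. The one point warranting a sentence of care is that after the vertex-constrained reduction we genuinely have $|T| = \Theta(mn)$, so that a bound like $O(|T_h| + n\cdot(h-1))$ summed over $O(\log m)$ recursive steps is indeed $O(mn)$ and not something larger; once that identification is made explicit, the additive $O(|T|)$ in the final bound accounts solely for reading the (possibly large) input tree of the original general instance.
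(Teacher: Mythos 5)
Your proposal is correct and follows essentially the same route as the paper: reduce to the vertex-constrained case, use the $O(mn)$-time decision procedure as the oracle, locate the two critical edges by $O(\log mn)$ centroid-halving applications of Lemma~\ref{lem:keylemma}, and finish with Lemma~\ref{lem:base}, for a total of $O(|T|+mn\log mn)$ time. The bookkeeping you supply (including the observation that $|T|=\Theta(mn)$ after the reduction) matches the paper's intended assembly of Theorem~\ref{theorem:1}.
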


\end{document}